\documentclass[runningheads]{llncs}
\bibliographystyle{splncs04}

\title{Solving Problems on Generalized Convex Graphs via Mim-Width\thanks{Brettell and Paulusma received support from the Leverhulme Trust (RPG-2016-258).
Bonomo received support from CONICET (PIP 11220200100084CO) and UBACyT (20020170100495BA and 20020160100095BA).
Brettell was also supported by a Rutherford Foundation Postdoctoral Fellowship, administered by the Royal Society Te Ap\={a}rangi. An extended abstract of the paper appeared in the proceedings of WADS 2021~\cite{BBMP21}.}}
\titlerunning{Solving Problems on Generalized Convex Graphs via Mim-width}

\author{Flavia Bonomo-Braberman\inst{1}\orcidID{0000-0002-9872-7528}
\and\\
Nick Brettell\inst{2}\orcidID{0000-0002-1136-418X}
\and\\
Andrea Munaro\inst{3}\orcidID{0000-0003-1509-8832}
\and\\
Dani\"el Paulusma\inst{4}\orcidID{0000-0001-5945-9287}}

\authorrunning{F. Bonomo-Braberman, N. Brettell, A. Munaro, D. Paulusma}

\institute{Universidad de Buenos Aires. Facultad de Ciencias
Exactas y Naturales. Departamento de Computaci\'on. Buenos Aires,
Argentina. / CONICET-Universidad de Buenos Aires. Instituto de
Investigaci\'on en Ciencias de la Computaci\'on (ICC). Buenos
Aires, Argentina, \email{fbonomo@dc.uba.ar} \and School of
Mathematics and Statistics,
Victoria University of Wellington, New Zealand,  \email{nick.brettell@vuw.ac.nz}
\and
School of Mathematics and Physics,
Queen's University Belfast, UK, \email{a.munaro@qub.ac.uk}
\and
Department of Computer Science,
Durham University, UK, \email{daniel.paulusma@durham.ac.uk}}

\oddsidemargin=1.4cm
\evensidemargin=1.4cm
\textwidth=14.1cm
\textheight=22.2cm
\topmargin=-1cm

\usepackage[T1]{fontenc}
\usepackage{amsmath,amssymb,xspace,xcolor,tikz,graphicx,tabularx,tablefootnote,boxedminipage,comment}
\usetikzlibrary{fit,automata,arrows}
\usetikzlibrary{patterns,hobby}
\usepackage{pgfplots}
\usepackage{subcaption}
\usepackage{hyperref}
\usepackage{cleveref}

\newcommand\faketheorem[1]{{\bfseries#1}}
\DeclareMathOperator{\thin}{thin}
\DeclareMathOperator{\pthin}{pthin}

\newcommand{\NP}{{\sf NP}}

\newcommand{\cutmim}{\mathrm{cutmim}}
\newcommand{\cutsim}{\mathrm{cutsim}}
\newcommand{\mimw}{\mathrm{mimw}}
\newcommand{\simw}{\mathrm{simw}}

\DeclareMathOperator{\pw}{pw}

\usetikzlibrary{shapes,shapes.geometric,arrows,fit,calc,positioning,automata,arrows}
\usetikzlibrary{patterns,hobby}
\usepackage{pgfplots}
\pgfplotsset{compat=1.15}

\tikzstyle{line}=[draw]

\tikzset{ n1/.style={circle,scale=1.0},
n2/.style={circle,fill=black,scale=0.5},
n3/.style={circle,draw,fill=black,draw=black,text=white,scale=0.9},
e1/.style={line width=0.1mm}, e3/.style={draw=black,line
width=0.7mm}, inter/.style={line width=0.85mm},
e2/.style={draw=black,line width=0.5mm}, c1/.style={line
width=0.3mm}, c2/.style={line width=0.2mm} }



\newcommand{\vertex}[4][black]{
    \draw[#1, fill=#1, inner sep=0pt] (#2, #3) circle (0.12) node(#4){};
}








\begin{document}

\maketitle

\begin{abstract}
A bipartite graph $G=(A,B,E)$ is ${\cal H}$-convex, for some family of graphs ${\cal H}$, if there exists a graph $H\in {\cal H}$ with $V(H)=A$ such that the set of neighbours in $A$ of each $b\in B$ induces a connected subgraph of~$H$.
Many $\mathsf{NP}$-complete problems, including problems such as $\textsc{Dominating Set}$, $\textsc{Feedback Vertex Set}$, $\textsc{Induced Matching}$ and $\textsc{List $k$-Colouring}$,
become polynomial-time solvable for ${\mathcal H}$-convex graphs when ${\mathcal H}$ is the set of paths. In this case, the class of ${\mathcal H}$-convex graphs is known as the class of convex graphs. The underlying reason is that the class of convex graphs
 has bounded mim-width. We extend the latter result to families of ${\mathcal H}$-convex graphs where (i) ${\mathcal H}$ is the set of cycles, or (ii) ${\mathcal H}$ is the set of trees with bounded maximum degree and a bounded number of vertices of degree at least~$3$.
As a consequence, we can strengthen a large number of results on generalized convex graphs known in the literature via one general and relatively short proof.
To complement result (ii), we show that the mim-width of ${\mathcal H}$-convex graphs is unbounded if ${\mathcal H}$ is the set of trees with arbitrarily large maximum degree or an arbitrarily large number of vertices of degree at least~$3$.
In this way we are able to determine complexity dichotomies for the aforementioned graph problems.
We prove our results via a more refined width-parameter analysis. This yields an even clearer picture of
 which width parameters are bounded for classes of ${\cal H}$-convex graphs.
\end{abstract}

\section{Introduction}\label{sec:intro}

Many computationally hard graph problems can be solved efficiently if we place constraints on the input.
Instead of solving individual problems in an ad hoc way we may try to decompose the vertex set of the input graph into large sets of ``similarly behaving'' vertices and to exploit this decomposition for an algorithmic speed up that works for many problems simultaneously. This requires some notion of an ``optimal'' vertex decomposition, which depends on the type of vertex decomposition used and which may relate to the minimum number of sets or the maximum size of a set in a vertex decomposition. An optimal vertex decomposition gives us the ``width'' of the graph.

A graph class has {\it bounded width} if every graph in the class has width at most some constant $c$. Boundedness of width is often the underlying reason why a graph-class-specific algorithm runs efficiently: in this case, the proof that the algorithm is efficient for some special graph class reduces to a proof showing that the width of the class is bounded by some constant. We will give examples, but also refer to the
surveys~\cite{DJP19,HOSG08,Ja20,KLM09,Va12}
 for further details and examples.

Width parameters differ in strength. A width parameter $p$ {\it dominates} a width parameter~$q$ if there is a function~$f$ such that $p(G)$ is at most $f(q(G))$ for every graph~$G$. If $p$ dominates $q$ but $q$ does not dominate $p$, then we say that $p$ is {\it more powerful} than $q$. If both $p$ and $q$ dominate each other, then $p$ and $q$ are {\it equivalent}.
If neither $p$ is more powerful than $q$ nor $q$ is more powerful than $p$, then $p$ and $q$ are {\it incomparable}.
If $p$ is more powerful than $q$, then
 the class of graphs for which $p$ is bounded is larger than the class of graphs for which $q$ is bounded and so efficient algorithms for bounded $p$ have greater applicability with respect to the graphs under consideration.
 The trade-off is that  fewer problems exhibit an efficient algorithm for the parameter $p$, compared to the parameter $q$.

This notion of powerfulness leads to a large hierarchy of width parameters,  in which new width parameters continue to be defined,
for example, graph functionality~\cite{AAL19} in 2019 and twin-width~\cite{BKTW20} in 2020.
The well-known parameters boolean-width, clique-width, module-width and rank-width are equivalent to each other~\cite{BTV11,OS06,Ra08}. They are more powerful than the equivalent parameters branch-width and treewidth~\cite{CO00,RS91,Va12} but less powerful than mim-width~\cite{Va12}, which is less powerful than sim-width~\cite{KKST17}. To give another example, thinness is more powerful than path-width~\cite{M-O-R-C-thinness}, but less powerful than mim-width and incomparable to clique-width or treewidth~\cite{BE19}.

For each group of equivalent width parameters, a growing set of \NP-complete problems is known to be tractable on graph classes of bounded width. Proving the latter for some graph class often immediately tells us that many problems are tractable for that class without the need for constructing algorithms for each problem.
However, there are still large families of graph classes, and many width parameters, for which it is not known whether the class has bounded width.

\medskip
\noindent
{\bf Our Focus.}
We consider the relatively new width parameter {\it mim-width}, which we define below.
Recently, we showed in~\cite{BHMPP20,BHMP} that boundedness of mim-width is the underlying reason why
some specific hereditary graph classes, characterized by two forbidden induced subgraphs,
admit polynomial-time algorithms for a range of problems including {\sc  $k$-Colouring} and its generalization {\sc List $k$-Colouring} (the algorithms are
given in~\cite{CSZ,CGKP15,HKLSS10}).
Here we prove that the same holds for certain {\it superclasses of convex graphs} known in the literature.
Essentially all the known polynomial-time algorithms for such classes are obtained by reducing to the class of convex graphs. We show that our new approach
via mim-width simplifies the analysis, unifies the sporadic approaches
and explains the reductions to convex graphs.

\medskip
\noindent
{\bf Mim-width.}
A set of edges $M$ in a graph $G$ is a {\it matching} if no two edges of $M$ share an endpoint. A matching $M$ is {\it induced} if there is no edge in $G$ between vertices of different edges of $M$. Let $(A,\overline{A})$ be a partition of the vertex set of a graph $G$. Then $G[A,\overline{A}]$ denotes the bipartite subgraph of $G$ induced by the edges with one endpoint in $A$ and the other in $\overline{A}$.
Vatshelle~\cite{Va12} introduced the notion of \emph{maximum induced matching width}, also called mim-width. Mim-width measures
the extent to which it is possible to decompose a graph $G$ along certain vertex partitions $(A,\overline{A})$ such that the size of a maximum induced matching in $G[A,\overline{A}]$ is small. The kind of vertex partitions permitted stems from classical branch decompositions,
as we explain below.

A \textit{branch decomposition} for a graph~$G$ is a pair $(T, \delta)$, where $T$ is a subcubic tree and $\delta$ is a bijection from~$V(G)$ to the leaves of $T$.
Every edge $e \in E(T)$ partitions the leaves of $T$ into two classes, $L_e$ and $\overline{L_e}$, depending on which component of $T-e$ they belong to.
Hence, $e$ induces a partition $(A_e, \overline{A_e})$ of $V(G)$, where $\delta(A_e) = L_e$ and $\delta(\overline{A_e}) = \overline{L_e}$.
Let $\cutmim_{G}(A_{e}, \overline{A_{e}})$ be the size of a maximum induced matching in $G[A_{e}, \overline{A_{e}}]$.
Then the \emph{mim-width} $\mimw_{G}(T, \delta)$ of $(T, \delta)$ is the maximum value of $\cutmim_{G}(A_{e}, \overline{A_{e}})$ over all edges $e\in E(T)$. The \emph{mim-width} $\mimw(G)$ of $G$ is the minimum value of $\mimw_{G}(T, \delta)$ over all branch decompositions $(T, \delta)$ for $G$.
See Figure~\ref{f-example} for an example of a branch decomposition for a graph.

A {\it caterpillar} is a tree $T$ that contains a path~$P$, the {\it backbone} of $T$, such that every vertex not on $P$ has a neighbour on~$P$ (the tree in Figure~\ref{f-example}~(b) is an example of a caterpillar).
 A \textit{linear branch decomposition} for a graph~$G$ is a pair $(T, \delta)$, where $T$ is a caterpillar and $\delta$ is a bijection from~$V(G)$ to the leaves
 of~$T$. The \emph{linear mim-width} of $G$ is the minimum value of $\mimw_{G}(T, \delta)$ over all linear branch decompositions $(T, \delta)$ for $G$. Note that Figure~\ref{f-example} is in fact an example of a linear branch decomposition.

The problems of computing the mim-width and linear mim-width of a graph are \NP-hard~\cite{SV16}. Moreover, approximating the mim-width in polynomial time within a constant factor of the optimal is not possible unless $\mathsf{NP} = \mathsf{ZPP}$~\cite{SV16}.
It is not known how to compute in polynomial time a branch decomposition for a graph~$G$ whose mim-width is bounded by some function in
$\mimw(G)$.
Nevertheless, if $G$ is from some graph class~${\cal G}$ of bounded mim-width, then this is often possible. In that case,
the mim-width of ${\cal G}$ is said to be {\it quickly computable}. One can then try to develop a polynomial-time algorithm for the graph problem under consideration via dynamic programming over the computed branch decomposition. We
give examples of such problems later.

\begin{figure}[t]
\hspace*{-0.5cm}
\begin{subfigure}[b]{0.4\textwidth}

\captionsetup{justification=raggedright}

\includegraphics[scale=0.8]{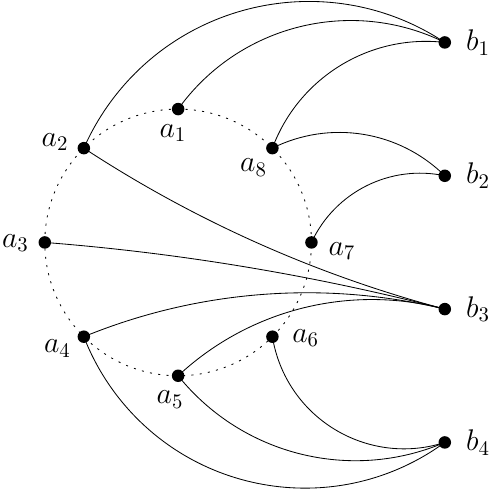}

\caption{}

\label{fig:circ}

\end{subfigure}
\hspace{1.3cm}
\begin{subfigure}[b]{0.55\textwidth}

\captionsetup{justification=raggedright}

\includegraphics[scale=0.5]{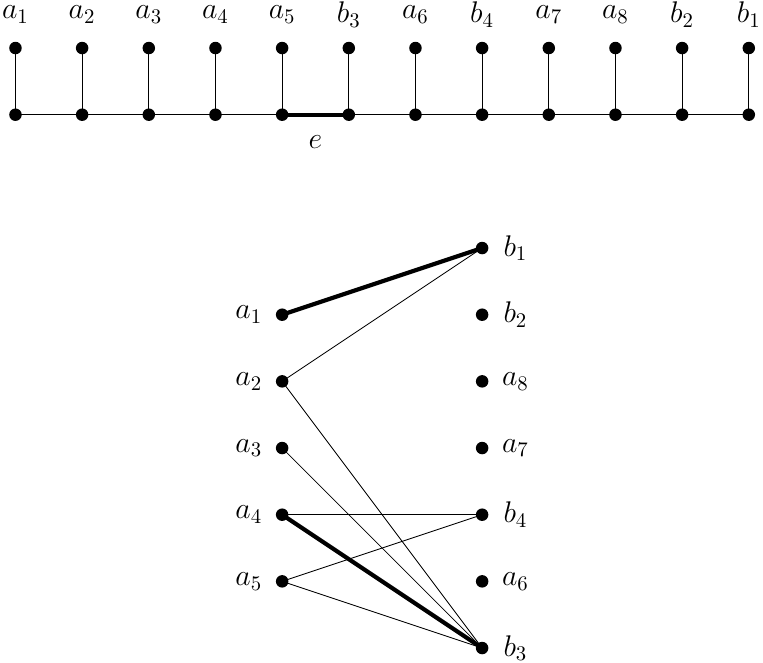}

\caption{}

\label{fig:branch}

\end{subfigure}
\caption{(\subref{fig:circ}) A circular convex graph $G = (A, B, E)$ with a circular ordering on $A$. (\subref{fig:branch}) A
(linear) branch decomposition $(T, \delta)$ for $G$, where $T$ is a caterpillar with a specified edge $e$, together with the graph $G[A_e, \overline{A_e}]$. The bold edges in $G[A_e, \overline{A_e}]$ form an induced matching and it is easy to see that $\cutmim_{G}(A_{e}, \overline{A_{e}}) = 2$.}\label{f-example}
\vspace*{-0.2cm}
\end{figure}

\smallskip
\noindent
{\bf Convex Graphs and Generalizations.}
A bipartite graph $G=(A,B,E)$ is {\it convex} if there exists a path $P$ with $V(P)=A$ such that the neighbours in $A$ of each $b\in B$ induce a connected subpath of $P$. Convex graphs generalize bipartite permutation graphs
(see, e.g., \cite{BLS99})
and form a well-studied graph class.
They were introduced in the sixties, by Glover \cite{Glo67}, to solve a special type of matching problem arising in some industrial application. Another early paper solving matching problems on convex graphs is by Lipski and Preparata \cite{LP81}.

Belmonte and Vatshelle \cite{BV13} proved that the mim-width of convex graphs is bounded and quickly computable.
 We consider superclasses of convex graphs and research to what extent mim-width can play a role in obtaining polynomial-time algorithms for problems on these classes.

Let ${\cal H}$ be a family of graphs.
A bipartite graph $G=(A,B,E)$ is ${\cal H}$-convex if there exists a graph $H\in {\cal H}$ with $V(H)=A$ such that the set of neighbours in $A$ of each $b\in B$ induces a connected subgraph of~$H$.
If ${\cal H}$ consists of all paths, we obtain the class of convex graphs.
Recall that a caterpillar is a tree $T$ that contains a
path~$P$ (the backbone), such that every vertex not on $P$ has a neighbour on~$P$. A caterpillar with a backbone consisting of one vertex is a {\it star}.
A {\it comb} is a caterpillar such that every backbone vertex has exactly one neighbour outside the backbone.
The {\it subdivision} of an edge $uv$ replaces $uv$ by a new vertex $w$ and edges $uw$ and $wu$. A {\it triad} is a tree that can be obtained from a $4$-vertex star after a sequence of subdivisions.
For $t, \Delta\geq 0$, a \textit{$(t, \Delta)$-tree} is a tree with maximum degree at most $\Delta$ and containing at most $t$ vertices of degree at least~$3$; note that, for example, a triad is a $(1,3)$-tree.
If ${\cal H}$ consists of all cycles, all trees, all stars, all triads, all combs or all $(t,\Delta)$-trees, then we obtain the class of \emph{circular convex graphs}, \emph{tree convex graphs}, \emph{star convex graphs}, \emph{triad convex graphs}, \emph{comb convex graphs} or \emph{$(t,\Delta)$-tree convex graphs}, respectively. See Figure~\ref{f-example} for an example of a circular convex graph (this class was introduced by Liang and Blum \cite{LB95} to model certain scheduling problems).

To show the relationships between the above graph classes we need some extra terminology.
Let $\mathcal{C}_{t, \Delta}$~be the class of $(t, \Delta)$-tree convex graphs. For fixed $t$ or $\Delta$, we have increasing sequences $\mathcal{C}_{t, 0} \subseteq \mathcal{C}_{t, 1} \subseteq \cdots$ and $\mathcal{C}_{0, \Delta} \subseteq \mathcal{C}_{1, \Delta} \subseteq \cdots$. For $t \in \mathbb{N}$, the class of \textit{$(t, \infty)$-tree convex graphs} is $\bigcup_{\Delta \in \mathbb{N}}\mathcal{C}_{t, \Delta}$, denoted by $\mathcal{C}_{t, \infty}$. Similarly, for $\Delta \in \mathbb{N}$, the class of \textit{$(\infty, \Delta)$-tree convex graphs} is $\bigcup_{t \in \mathbb{N}}\mathcal{C}_{t, \Delta}$, denoted by $\mathcal{C}_{\infty, \Delta}$.
Hence,
$\mathcal{C}_{t, \infty}$ and $\mathcal{C}_{\infty, \Delta}$ are the set-theoretic limits of the increasing sequences $\{\mathcal{C}_{t, \Delta}\}_{\Delta \in \mathbb{N}}$ and $\{\mathcal{C}_{t, \Delta}\}_{t \in \mathbb{N}}$, respectively. The class of \textit{$(\infty, \infty)$-tree convex graphs} is $\bigcup_{t, \Delta \in \mathbb{N}}\mathcal{C}_{t, \Delta}$, which coincides with the class of tree convex graphs. Notice that the class of convex graphs coincides with $\mathcal{C}_{t, 2}$, for any $t \in \mathbb{N} \cup \{\infty\}$, and with $\mathcal{C}_{0, \Delta}$, for any $\Delta \in \mathbb{N} \cup \{\infty\}$. The class of star convex graphs coincides with $\mathcal{C}_{1, \infty}$. Moreover, each triad convex graph belongs to~$\mathcal{C}_{1, 3}$ and each comb convex graph belongs to $\mathcal{C}_{\infty, 3}$.

A bipartite graph is {\it chordal bipartite} if every induced cycle in it has exactly four vertices.
Every convex graph is chordal bipartite (see, e.g., \cite{BLS99}) and every chordal bipartite graph is tree convex (see~\cite{JLWX13,Liu14}).
In Figure~\ref{fig:dia} we display the relationships between these classes.

Brault{-}Baron et al.~\cite{BCM15} proved that chordal bipartite graphs have unbounded mim-width. Hence, the result of~\cite{BV13} for convex graphs cannot be generalized to chordal bipartite graphs. We determine the mim-width of the other classes in Figure~\ref{fig:dia} but first discuss known algorithmic results for these classes.

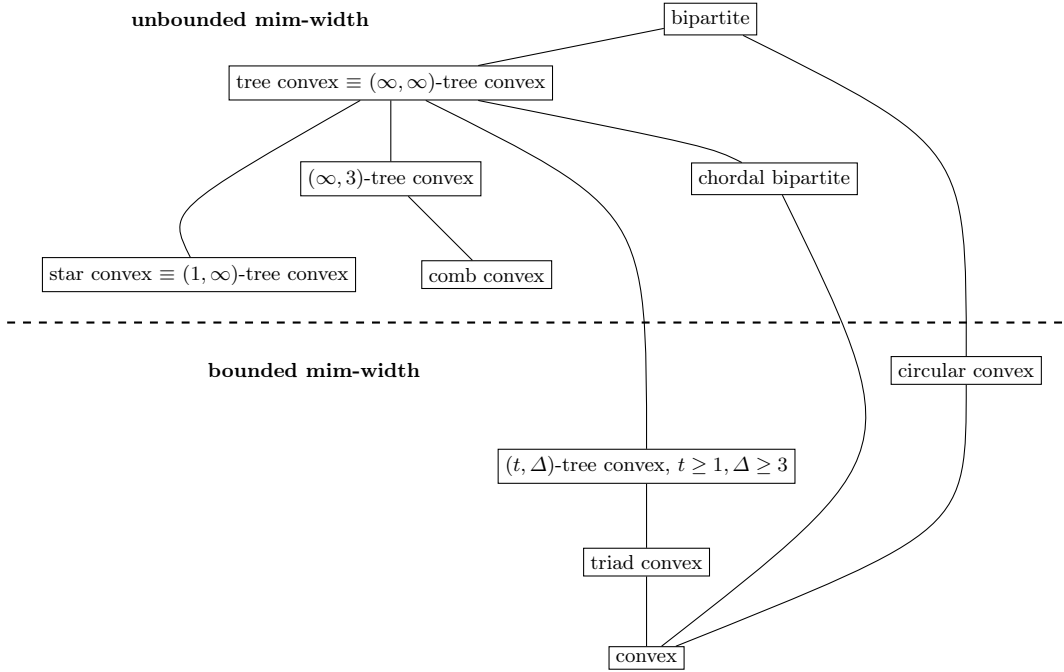
\begin{figure}
\centering
\begin{tikzpicture}[scale=0.85, every node/.style={transform shape}]
\node[rectangle,draw] (bipartite) at (6,16) {bipartite};
\node[rectangle,draw] (treeconvex) at (1,15) {tree convex $\equiv$ $(\infty, \infty)$-tree convex};
\node[rectangle,draw] (circularconvex) at (10,10.5) {circular convex};
\node[rectangle,draw] (tr) at (5,9) {$(t, \Delta)$-tree convex, $t\geq 1, \Delta \geq 3$};
\node[rectangle,draw] (inftyr) at (1,13.5) {$(\infty, 3)$-tree convex};
\node[rectangle,draw] (combconvex) at (2.5,12) {comb convex};
\node[rectangle,draw] (starconvex) at (-2,12) {star convex $\equiv$ $(1, \infty)$-tree convex}; 
\node[rectangle,draw] (chordalbip) at (7,13.5) {chordal bipartite};
\node[rectangle,draw] (triad) at (5,7.5) {triad convex};
\node[rectangle,draw] (convex) at (5,6) {convex};
\node (classI) at (-1.2,16) {{\bf unbounded mim-width}};
\node (classI) at (-0.2,10.5) {{\bf bounded mim-width}};

\draw[-] (bipartite) ..controls (10,14).. (circularconvex)
(bipartite) -- (treeconvex)
(treeconvex) ..controls (6,14)..  (chordalbip)
(treeconvex) ..controls (5,13).. (tr)
(starconvex) ..controls (-2.5,13).. (treeconvex)
(combconvex) -- (inftyr)
(inftyr) -- (treeconvex)
(tr) -- (triad)
(triad) -- (convex)
(convex) ..controls (9.1,9.2).. (chordalbip)
(convex) ..controls (10,8)..  (circularconvex);

\draw [dashed,thick] (-5,11.25) to (11.5,11.25);
\end{tikzpicture}
\caption{The inclusion relations between the classes we consider.
A line from a lower-level class to a higher one means the first class is contained in the second.}\label{fig:dia}
\vspace*{-0.13cm}
\end{figure}

\medskip
\noindent
{\bf Known Results.}
Belmonte and Vatshelle \cite{BV13} and Bui-Xuan et al.~\cite{BTV13} proved that so-called  Locally Checkable Vertex Subset and Vertex Partitioning (LC-VSVP) problems, first defined in~\cite{TP97},
are polynomial-time solvable on graph classes whose mim-width is bounded and quickly computable.
This result was extended by Bergougnoux and Kant\'e~\cite{BK19} to variants of such problems with additional constraints on connectivity or acyclicity.
Each of the problems mentioned below is a special case of a Locally Checkable Vertex Subset (LCVS) problem
possibly with one of the two extra constraints.
We refer to the listed papers for the definitions of the problems, as we do not need them here.

Panda et al.~\cite{PPCDK20} proved that {\sc Induced Matching} is polynomial-time solvable for
circular convex and triad convex graphs, but \NP-complete for star convex and comb convex graphs. Pandey and Panda \cite{PP19} proved that {\sc Dominating Set} is polynomial-time solvable for circular convex, triad convex and
$(1, \Delta)$-tree convex graphs for every $\Delta\geq 1$. Liu et al.~\cite{LLX15} proved that {\sc Connected Dominating Set} is polynomial-time solvable for  circular convex and triad convex graphs. Chen et al.~\cite{CLL16} showed that {\sc Dominating Set}, {\sc Connected Dominating Set}
 and {\sc Total Dominating Set} are
 \NP-complete for star convex and comb convex graphs.

Lu et al.~\cite{LLX13} proved that {\sc Independent Dominating Set} is polynomial-time solvable for circular convex and triad convex graphs. The latter result was previously shown in~\cite{SLX12} using a dynamic programming approach instead of a reduction to convex graphs~\cite{LLX13}. Song et al.~\cite{SLX12} showed in fact a stronger result, namely that {\sc Independent Dominating Set}
 is polynomial-time solvable for $(t,\Delta)$-tree convex graphs for every $t\geq 1$ and $\Delta\geq 3$. They also showed that {\sc Independent Dominating Set} is \NP-complete for star convex and comb convex graphs~\cite{SLX12}. Hence, they obtained a dichotomy: {\sc Independent Dominating Set} is polynomial-time solvable for $(t,\Delta)$-tree convex graphs for every $t\geq 1$ and $\Delta\geq 3$ but \NP-complete for $(\infty,3)$-tree convex graphs and $(1,\infty)$-tree convex graphs.

The same dichotomy (explicitly formulated in~\cite{WLJX12})
holds for {\sc Feedback Vertex Set} and is obtained similarly. Namely, Jiang et al.~\cite{JLX11} proved that this problem is polynomial-time solvable for triad convex graphs and mentioned that their algorithm can be generalized to $(t,\Delta)$-tree convex graphs for every $t\geq 1$ and $\Delta\geq 3$.
Jiang et al.~\cite{JLWX13} proved that {\sc Feedback Vertex Set} is \NP-complete for star convex and comb convex graphs.
In addition, Liu et al.~\cite{LLLX14} proved that {\sc Feedback Vertex Set} is polynomial-time solvable for circular convex graphs, whereas Jiang et al.~\cite{JLWX13} proved that the {\sc Weighted Feedback Vertex Set} problem is polynomial-time solvable for triad convex graphs.

It turns out that the above problems are polynomial-time solvable on circular convex graphs and subclasses of $(t,\Delta)$-tree convex graphs, but \NP-complete for star convex graphs and comb convex graphs.
In contrast, Panda and Chaudhary~\cite{PC19} proved that {\sc Dominating Induced Matching} is not only polynomial-time solvable on circular convex and triad convex graphs, but also on star convex graphs.
Nevertheless, we notice {\it a common pattern}: many dominating set, induced matching and graph transversal type of problems are polynomial-time solvable for $(t,\Delta)$-tree convex graphs, for every $t\geq 1$ and $\Delta\geq 3$, and \NP-complete for comb convex graphs, and thus for $(\infty,3)$-tree convex graphs, and star convex graphs, or equivalently, $(1,\infty)$-tree convex graphs. Moreover, essentially all the  polynomial-time algorithms reduce the input to a convex graph.

\medskip
\noindent
{\bf Our Results.}
We simplify the analysis, unify the above approaches and explain the reductions to convex graphs, using mim-width. We prove three results that, together with the fact that chordal bipartite graphs have unbounded mim-width~\cite{BCM15}, explain the dotted line in Figure~\ref{fig:dia}. The first two results generalize the result of \cite{BV13} for convex graphs. The third result gives two new reasons why tree convex graphs (that is, $(\infty,\infty)$-tree convex graphs) have unbounded mim-width.

\begin{theorem}\label{t-1}
The mim-width of the class of circular convex graphs is bounded and quickly computable.
\end{theorem}

\begin{theorem}\label{bounded}
For every $t, \Delta \in \mathbb{N}$ with $t \geq 1$ and $\Delta\geq 3$,
the mim-width of the class of  $(t, \Delta)$-tree convex graphs is bounded and quickly computable.
\end{theorem}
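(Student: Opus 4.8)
The plan is to exploit the bounded ``topological complexity'' of a $(t,\Delta)$-tree in order to reduce the problem to the convex case settled by Belmonte and Vatshelle~\cite{BV13}. Let $G=(A,B,E)$ be $(t,\Delta)$-tree convex, witnessed by a tree $H$ with $V(H)=A$, maximum degree at most $\Delta$, and a set $S$ of at most $t$ vertices of degree at least~$3$. First I would analyse the structure of $H$. Since every vertex of $H-S$ has degree at most $2$, the graph $H-S$ is a disjoint union of paths; and as deleting a vertex $v$ from a tree increases the number of components by at most $\deg_H(v)-1$, the graph $H-S$ has at most $1+\sum_{v\in S}(\deg_H(v)-1)\le 1+t(\Delta-1)$ path-components. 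Treating each vertex of $S$ as a trivial one-vertex part, I obtain a partition $A=A_1\cup\cdots\cup A_d$ into $d\le 1+t\Delta$ parts, each inducing a connected subgraph (a path or a single vertex) of $H$. The point is that $d$ depends only on $t$ and $\Delta$.

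Next I would record the key consequence of this decomposition. Fix any $b\in B$. By definition $N_G(b)\cap A$ induces a connected subtree $T_b$ of $H$, and since the intersection of two connected subtrees of a tree is again connected, $T_b\cap A_i$ is connected for every~$i$; that is, $N_G(b)$ meets each part $A_i$ in a subpath. Concatenating the parts in a fixed order yields a linear order $\sigma$ of $A$ in which $N_G(b)\cap A$ is a union of at most $d$ intervals, one per part. Equivalently, for each $i$ the bipartite graph $G_i=G[A_i\cup B]$ (which carries only $A_i$--$B$ edges) is a convex graph with convex order $\sigma|_{A_i}$, so $G$ is a ``$d$-interval'' overlay of $d$ convex graphs sharing the side~$B$.

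I would then build an explicit branch decomposition $(T,\delta)$ realising small mim-width, taking $T$ to be a caterpillar whose spine follows $\sigma$ (expanded into a subcubic tree in the standard way) and attaching each $b\in B$ near the left endpoint of its leftmost interval. It remains to bound $\cutmim_G(A_e,\overline{A_e})$ for every edge $e$ of $T$. For the single-interval case ($d=1$, a genuine convex graph) a short interval-nesting argument shows that every such cut admits an induced matching of size at most~$1$: an induced-matching edge must join a ``right'' vertex of $A$ to a ``left'' vertex of $B$, and two such edges would force the two $B$-intervals to be simultaneously nested inside, and to the right of, one another, a contradiction. I expect the same style of argument, bookkeeping each of the at most $d$ intervals of every $B$-neighbourhood separately, to bound the mim-value of each cut by a function of $d$, and hence of $t$ and $\Delta$. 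This counting step -- controlling how induced-matching edges originating from different parts $A_i$ can interact across a single cut -- is the part I expect to be the main obstacle, since, unlike in the convex case, a single $b$ may contribute matching edges ``from'' several parts at once.

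Finally, for quick computability I would observe that, given the tree $H$, everything above is constructive in polynomial time: computing $S$, the path-components, the order $\sigma$, the intervals of each $N_G(b)$, and assembling the caterpillar are all routine. Thus the only remaining ingredient is to obtain the representation $H$ -- either assumed to be part of the input, or produced by a recognition procedure for the class (recall $t$ and $\Delta$ are fixed) -- after which the branch decomposition of bounded mim-width is computed directly.
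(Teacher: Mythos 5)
Your set-up is sound: the partition of $H-S$ into at most $1+t(\Delta-1)$ paths (plus $t$ singletons), the fact that $N_G(b)\cap A_i$ is a subpath of each part (intersections of subtrees of a tree are subtrees), and the one-directional structure of cuts induced by your placement rule are all correct, and in spirit this is close to the paper's refined analysis (Theorem~\ref{t-thin} and Corollary~\ref{c-bounded}, which order $A$ by a postorder of the support tree and split it into at most $2+t(\Delta-2)$ path classes). However, the step you defer---bounding $\cutmim_G(A_e,\overline{A_e})$ for every cut of your caterpillar ``by a function of $d$''---is not bookkeeping; it is the entire proof, and in the form you hope for it is \emph{false}. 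After your reduction, the only structure you retain is that every $N_G(b)$ is a union of at most $d$ intervals of the order $\sigma$, plus the leftmost-interval placement rule. That information cannot bound mim-width for any fixed $d\ge 4$: in \emph{any} linear order of one side of \emph{any} bipartite graph of maximum degree $4$, each neighbourhood is trivially a union of at most $4$ intervals (each neighbour is an interval by itself), yet such graphs---for instance grids---have unbounded mim-width (indeed unbounded sim-width~\cite{KKST17}, as used in Theorem~\ref{t-simw}). So for a grid your construction yields caterpillars with cuts carrying arbitrarily large induced matchings, and no per-interval accounting can prevent this. The missing ingredient is a nesting property that your reduction discards: since $S$ contains \emph{all} vertices of degree at least $3$, every vertex of $H-S$ has degree at most $2$ in $H$ itself, so each path part $A_i$ attaches to the rest of $H$ only at its (at most two) endpoints. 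Hence, for every $b$ whose neighbourhood meets both $A_i$ and $A\setminus A_i$, the interval $N_G(b)\cap A_i$ contains an endpoint of the path $A_i$, i.e., it is a prefix or a suffix of $A_i$; prefixes are pairwise nested, as are suffixes, and two nested intervals can never support two edges of an induced matching across a cut. Combined with your single-interval argument for those $b$ with $N_G(b)\subseteq A_i$, this caps each part's contribution at $3$, giving a cut bound of $O(t\Delta)$. With that lemma stated and proved, your plan does go through.

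For comparison, the paper takes a genuinely different route: induction on $t$ via a gluing lemma for vertex partitions (Lemma~\ref{mimmultijoin}), with base case $t=1$ (Lemma~\ref{spider}) obtained by removing the unique branch vertex, and with the inductive step splitting the support tree along an edge $uv$ that separates the branch vertices into two nonempty groups each of size less than $t$; the resulting single cut is bounded by $\Delta(t-1)$ via a pigeonhole argument on the root-to-leaf paths of the rooted subtree $T_2$, using precisely the ``subtrees through $v$ are forced to nest'' phenomenon that your argument is missing. Your approach, once repaired, has the advantage of producing a \emph{linear} branch decomposition in one shot (as does the paper's thinness proof), whereas the inductive proof assembles the decomposition recursively. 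Finally, on ``quickly computable'': you cannot simply assume the support $H$ is given, nor wave at ``a recognition procedure''---the paper needs Lemma~\ref{rectr} (built on Buchin et al.~\cite{BKM11}), which computes a $(t,\Delta)$-tree support in $O(n^{t+3})$ time; an appeal to exactly such a result is required to finish the statement as claimed.
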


\begin{theorem}\label{t-3}
The class of star convex graphs and the class of comb convex graphs have unbounded mim-width.
\end{theorem}
Hence, we obtain a structural dichotomy (recall that star convex graphs are the $(1,\infty)$-tree convex graphs and that
comb convex graphs are $(\infty,3)$-tree convex):

\begin{corollary}\label{c-1} Let $t, \Delta \in \mathbb{N} \cup \{\infty\}$ with $t\geq 1$ and $\Delta\geq 3$.
The class of $(t, \Delta)$-tree convex graphs has bounded mim-width if and only if $\{t, \Delta\} \cap \{\infty\} = \varnothing$.
\end{corollary}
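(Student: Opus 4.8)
The plan is to derive the corollary directly from Theorems~\ref{bounded} and~\ref{t-3}, together with the class-inclusion relations recorded in Figure~\ref{fig:dia} and the elementary fact that boundedness of mim-width is monotone under taking subclasses: if $\mathcal{C}' \subseteq \mathcal{C}$ and $\mathcal{C}'$ has unbounded mim-width, then so does~$\mathcal{C}$ (a constant bounding every graph in $\mathcal{C}$ would a fortiori bound every graph in $\mathcal{C}'$). I would split the biconditional into its two directions.

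For the ``if'' direction, I would suppose $\{t,\Delta\} \cap \{\infty\} = \varnothing$, so that $t$ and $\Delta$ are both finite with $t \geq 1$ and $\Delta \geq 3$. Then the class in question is exactly $\mathcal{C}_{t,\Delta}$ with finite indices, and Theorem~\ref{bounded} gives immediately that its mim-width is bounded (indeed quickly computable). Nothing further is required here.

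For the ``only if'' direction I would argue the contrapositive: if $\{t,\Delta\}\cap\{\infty\}\neq\varnothing$, then the mim-width is unbounded. Since $t \geq 1$ and $\Delta \geq 3$ throughout, two (possibly overlapping) cases arise. If $\Delta = \infty$, then $\mathcal{C}_{1,\infty} \subseteq \mathcal{C}_{t,\infty}$ because the sequence $\{\mathcal{C}_{t,\Delta}\}_t$ is increasing and $t \geq 1$; as $\mathcal{C}_{1,\infty}$ is precisely the class of star convex graphs, which has unbounded mim-width by Theorem~\ref{t-3}, monotonicity yields that $\mathcal{C}_{t,\infty}$ has unbounded mim-width. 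If instead $t = \infty$ (with $\Delta \geq 3$ finite or $\Delta = \infty$), then every comb convex graph lies in $\mathcal{C}_{\infty,3} \subseteq \mathcal{C}_{\infty,\Delta}$, the last inclusion following since $\{\mathcal{C}_{\infty,\Delta}\}_\Delta$ is increasing; since comb convex graphs have unbounded mim-width by Theorem~\ref{t-3}, monotonicity again gives the conclusion. These cases jointly cover every way an index can equal~$\infty$.

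The argument is essentially bookkeeping, so I do not anticipate a genuine obstacle; the one point needing care is matching each ``infinite'' regime to a subclass already known to have unbounded mim-width. Concretely, one must check the inclusions $\mathcal{C}_{1,\infty} \subseteq \mathcal{C}_{t,\infty}$ and $(\text{comb convex}) \subseteq \mathcal{C}_{\infty,3} \subseteq \mathcal{C}_{\infty,\Delta}$, all of which are immediate from the definitions and the monotonicity of $\mathcal{C}_{t,\Delta}$ in each index. The only mild subtlety is that Theorem~\ref{t-3} is phrased for star convex and comb convex graphs rather than for the full classes $\mathcal{C}_{1,\infty}$ and $\mathcal{C}_{\infty,3}$; but since star convex graphs coincide with $\mathcal{C}_{1,\infty}$ and comb convex graphs are contained in $\mathcal{C}_{\infty,3}$, this discrepancy is harmless.
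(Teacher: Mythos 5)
Your proposal is correct and matches the paper's own (implicit) derivation: the paper obtains Corollary~\ref{c-1} exactly by combining Theorem~\ref{bounded} for the finite case with Theorem~\ref{t-3} for the infinite cases, using that star convex graphs are the $(1,\infty)$-tree convex graphs, that comb convex graphs lie in $\mathcal{C}_{\infty,3}$, and monotonicity of boundedness under class inclusion. Your case analysis and the remark on the comb convex inclusion being one-directional are exactly the bookkeeping the paper leaves to the reader.
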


\noindent
{\bf Algorithmic Consequences.}
As discussed, the following six problems were shown to be $\mathsf{NP}$-complete for star convex and comb convex graphs, and thus for $(1,\infty)$-tree convex graphs and $(\infty,3)$-tree convex graphs:
\textsc{Feedback Vertex Set} \cite{BK19,JLWX13};
\textsc{Dominating Set}, \textsc{Connected Dominating Set}, \textsc{Total Dominating Set} \cite{CLL16};
\textsc{Independent Dominating Set} \cite{SLX12};
\textsc{Induced Matching}~\cite{PPCDK20}.
These problems are examples of LCVS problems,
possibly with connectivity or acyclicity constraints. Hence, they are polynomial-time solvable for every graph class whose mim-width is bounded and quickly
computable~\cite{BV13,BK19,BTV13}.
Recall that the same holds for {\sc Weighted Feedback Vertex Set}~\cite{JKT20} and {\sc (Weighted) Subset Feedback Vertex Set}~\cite{BPT20}; these three problems generalize {\sc Feedback Vertex Set} and are thus \NP-complete for
star convex graphs and comb convex graphs.
 Combining these results with Corollary~\ref{c-1} yields the following complexity dichotomy.

\begin{corollary}\label{c-2}
  Let $t,\Delta \in \mathbb{N} \cup \{\infty\}$ with
  $t\geq 1$, $\Delta\geq 3$ and $\Pi$ be one of the nine problems mentioned above,
 restricted to $(t, \Delta)$-tree convex graphs.
  If $\{t, \Delta\} \cap \{\infty\} = \varnothing$, then $\Pi$ is polynomial-time solvable; otherwise, $\Pi$ is \NP-complete.
\end{corollary}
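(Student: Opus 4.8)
The plan is to derive this dichotomy by combining the structural results \Cref{c-1} and \Cref{bounded} with the complexity classifications already known on the two boundary classes, the star convex graphs and the comb convex graphs. The argument splits along the two cases in the statement, and in each case the genuine work has been done elsewhere, so the proof is essentially an assembly.

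First I would handle the tractable case, where $\{t,\Delta\}\cap\{\infty\}=\varnothing$, that is, both $t$ and $\Delta$ are finite. By \Cref{c-1} the class of $(t,\Delta)$-tree convex graphs then has bounded mim-width, and by \Cref{bounded} this mim-width is quickly computable, so for any input graph in the class I can construct in polynomial time a branch decomposition of bounded mim-width. Each of the nine problems is a Locally Checkable Vertex Subset problem, several of them carrying an additional connectivity or acyclicity constraint, and every such problem admits a polynomial-time algorithm on any graph class of bounded, quickly computable mim-width: this is the meta-algorithmic result of \cite{BV13,BTV13}, extended to the connectivity and acyclicity variants in \cite{BK19}, while the weighted and subset generalizations of \textsc{Feedback Vertex Set} are covered by \cite{JKT20,BPT20}. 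Feeding the branch decomposition from \Cref{bounded} into these algorithms yields the claimed polynomial-time bound.

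For the hard case, where $\{t,\Delta\}\cap\{\infty\}\neq\varnothing$, the strategy is to exhibit the star convex graphs or the comb convex graphs as a subclass and to invoke monotonicity of \NP-hardness under taking superclasses. If $\Delta=\infty$, then since $t\ge1$ we have $\mathcal{C}_{1,\infty}\subseteq\mathcal{C}_{t,\infty}$, and $\mathcal{C}_{1,\infty}$ is precisely the class of star convex graphs; if instead $t=\infty$, then since $\Delta\ge3$ the comb convex graphs form a subclass of $\mathcal{C}_{\infty,3}\subseteq\mathcal{C}_{\infty,\Delta}$. These two subcases exhaust all possibilities under the hypotheses $t\ge1$ and $\Delta\ge3$. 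Each of the nine problems is already recorded as \NP-complete on star convex graphs and on comb convex graphs --- for the six base problems in \cite{JLWX13,CLL16,SLX12,PPCDK20}, and for \textsc{Weighted Feedback Vertex Set} and \textsc{(Weighted) Subset Feedback Vertex Set} because they generalize \textsc{Feedback Vertex Set}. Since each problem remains in \NP\ on the larger class and \NP-hardness is inherited by any superclass, \NP-completeness follows on the full class of $(t,\Delta)$-tree convex graphs.

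I expect no real mathematical obstacle here: the content lives in \Cref{c-1} and \Cref{bounded}, and what remains is bookkeeping. The one point I would take care over is verifying, problem by problem, that each of the nine problems genuinely fits the Locally Checkable Vertex Subset framework (with a connectivity or acyclicity constraint where needed) so that the meta-theorems apply without modification, and simultaneously that each is documented as \NP-complete on both boundary classes. Once this correspondence is confirmed uniformly across all nine problems, the two cases above close the dichotomy.
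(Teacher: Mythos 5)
Your proposal is correct and follows essentially the same route as the paper: the tractable case combines \Cref{bounded} (and \Cref{c-1}) with the meta-theorems of \cite{BV13,BTV13,BK19,JKT20,BPT20}, and the hard case inherits \NP-hardness from the star convex ($\mathcal{C}_{1,\infty}$) and comb convex ($\subseteq\mathcal{C}_{\infty,3}$) subclasses via \cite{JLWX13,CLL16,SLX12,PPCDK20} and the fact that the weighted/subset variants generalize \textsc{Feedback Vertex Set}. If anything, you are slightly more explicit than the paper in spelling out the subclass inclusions and the need for quick computability from \Cref{bounded} rather than mere boundedness from \Cref{c-1}.
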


\noindent
It is worth noting that this complexity dichotomy does not hold for all LCVS problems; recall that \textsc{Dominating Induced Matching} is polynomial-time solvable on star convex graphs~\cite{PC19}. Theorems~\ref{t-1} and~\ref{bounded}, combined with the result of~\cite{BTV13}, imply that this problem is also polynomial-time solvable on circular convex graphs and $(t,\Delta)$-tree convex graphs for every $t\geq 1$ and $\Delta\geq 3$.

\medskip
\noindent
Our results have further algorithmic consequences.
For every fixed integer~$k\geq 1$, the {\sc $k$-Colouring} problem is an example of an LC-VSVP problem. Kwon~\cite{Kw} observed that
even its generalization {\sc List $k$-Colouring} is polynomial-time solvable on graph classes whose mim-width is bounded and quickly computable (see~\cite{BHMP} for details). Hence, Theorems~\ref{t-1} and~\ref{bounded}, combined with Kwon's observation~\cite{Kw},
also generalize a result of D{\'{\i}}az et al.~\cite{DDSS20} for {\sc List $k$-Colouring} on convex graphs to
 circular convex and $(t,\Delta)$-tree convex graphs (for any fixed $t\geq 1$, $\Delta\geq 3$).
 We prove the following result (note that the complexity of {\sc List $3$-Colouring} for comb convex graphs is still unresolved).

 \begin{theorem}\label{t-lc}
For $k\geq 4$, {\sc List $k$-Colouring} is \NP-complete for star convex graphs and comb convex graphs,
while {\sc List $3$-Colouring} is polynomial-time solvable for star convex graphs.
 \end{theorem}

 \noindent
 Thus, in addition to Corollary~\ref{c-2}, we obtain the following corollary for yet another problem.

 \begin{corollary}\label{c-22}
 For $k\geq 4$ and $t,\Delta \in \mathbb{N} \cup \{\infty\}$ with
  $t\geq 1$, $\Delta\geq 3$, {\sc List $k$-Colouring},
 restricted to $(t, \Delta)$-tree convex graphs, is polynomial-time solvable
 if $\{t, \Delta\} \cap \{\infty\} = \varnothing$, and \NP-complete otherwise.
\end{corollary}

\noindent
{\bf Outline.}
We prove Theorems~\ref{t-1}--\ref{t-3} in Sections~\ref{s-1}--\ref{s-3}, respectively, by showing additional structural results.
First, in Section~\ref{s-1}, we prove Theorem~\ref{t-1} by showing a similar but stronger result for linear mim-width of circular convex graphs.
Next, in Section~\ref{s-2}, we prove a stronger version of Theorem~\ref{bounded}; namely we prove the result for thinness, a relatively recent graph width parameter introduced in~\cite{M-O-R-C-thinness}.
This replaces the more direct proof given in~\cite{BBMP21}.
Finally, in Section~\ref{s-3}, we prove that Theorem~\ref{t-3} follows from a stronger result as well: we prove that star convex graphs and comb convex graphs have unbounded sim-width. Recall that the latter width parameter is more powerful than mim-width~\cite{KKST17}.

The above additional results naturally beg the question for a more refined analysis on width parameters for generalized convex classes.
We perform this analysis in Section~\ref{s-refined}. There, we consider a hierarchy of width parameters, which include mim-width and thinness. We determine exactly which of the generalized convex classes considered in the previous sections have bounded width for each of these parameters.
We are not yet aware of any new algorithmic results arising from this analysis,
but in Section~\ref{s-final} we give some directions for future research, after proving Theorem~\ref{t-lc} in Section~\ref{s-new2}.

\medskip
\noindent
{\bf Preliminaries.}
We consider only finite graphs $G=(V,E)$ with no loops and no multiple edges.
For $v \in V$, the \textit{neighbourhood} $N_G(v)$ is the set of vertices adjacent to $v$. The \textit{degree} $d(v)$ of a vertex $v \in V$ is the size $|N_G(v)|$. A vertex of degree $k$ is a \textit{$k$-vertex}. A graph is \textit{subcubic} if every vertex has degree at most~$3$. We let $\Delta(G) = \max\{d(v) : v \in V\}$.
For disjoint $S,T \subseteq V$, we say that $S$ is \textit{complete to} $T$ if every vertex of $S$ is adjacent to every vertex of $T$. For $S\subseteq V$, $G[S]=(S,\{uv\; :\; u,v\in S, uv\in E\})$
is the subgraph of $G$ \textit{induced} by $S$. The \textit{disjoint union} $G+H$ of graphs $G$ and $H$ has vertex set $V(G) \cup V(H)$ and edge set $E(G) \cup E(H)$.

A graph is \textit{$r$-partite}, for $r \geq 2$, if its vertex set admits a partition into $r$ classes such that every edge has its endpoints in different classes. A $2$-partite graph is also called \textit{bipartite}. A graph $G$ is a \textit{support} for a hypergraph $H = (V, \mathcal{S})$ if the vertices of $G$ correspond to the vertices of $H$ and, for each hyperedge $S \in \mathcal{S}$, the subgraph of $G$ induced by $S$ is connected.
When a bipartite graph $G=(A,B,E)$ is viewed as a hypergraph $H=(A,\{N(b) : b \in B\})$, then a support $T$ for $H$ with $T \in \mathcal{H}$ is a witness that $G$ is $\mathcal{H}$-convex.

\section{The Proof of Theorem~\ref{t-1}}\label{s-1}

We need the following known lemma on recognizing circular convex graphs.

\begin{lemma}[see, e.g., Buchin et al.~\cite{BKM11}]\label{reccirc} Circular convex graphs can be recognized and a cycle support computed, if it exists, in polynomial time.
\end{lemma}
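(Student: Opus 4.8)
The plan is to recognise that this is, at heart, a matrix-ordering problem and to reduce it to a classical one. Write $M$ for the bipartite adjacency matrix of $G=(A,B,E)$ with rows indexed by $B$ and columns by $A$, so that the row of $b$ is the characteristic vector of $N(b)$. Then $G$ is circular convex precisely when $M$ has the \emph{circular ones property}: there is a cyclic order of the columns in which the $1$'s of every row occur in a circularly consecutive block. Such a cyclic order is exactly a cycle support, namely the cycle on $A$ whose consecutive pairs are the adjacent columns; each $N(b)$ then induces a path, hence a connected subgraph. So the task is to test the circular ones property and, on success, output such a cyclic order, and I would reduce it to the (linear) \emph{consecutive ones property} so as to invoke Booth and Lueker's linear-time PQ-tree algorithm as a black box.

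The reduction I would use is the standard one. Fix an arbitrary column $a^*\in A$ and form $M'$ from $M$ by complementing every row $b$ with $a^*\in N(b)$ (replacing that row by its bitwise complement), leaving the other rows unchanged. I claim that $M$ has the circular ones property if and only if $M'$ has the consecutive ones property, and that a linear column order $\tau$ certifying the latter, read cyclically, certifies the former. For the forward direction, take a cyclic order realising the circular ones property and cut the circle immediately after $a^*$ to obtain a linear order ending at $a^*$: a row that does not contain $a^*$ cannot straddle this cut and so is already a linear interval, whereas a row that contains $a^*$ is either a suffix of the linear order or wraps across the cut, and in both cases its complement is a linear interval (a prefix, respectively an interior interval); hence $M'$ has the consecutive ones property. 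For the converse, given a linear order $\tau$ in which every row of $M'$ is consecutive, the non-complemented rows of $M$ are $\tau$-intervals and the complemented rows of $M$ are complements of $\tau$-intervals, and both kinds are circularly consecutive once $\tau$ is closed into a cycle; thus $M$ has the circular ones property and the cyclic closure of $\tau$ is the desired cycle support.

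With the reduction in hand the algorithm is immediate: build $M$, pick any $a^*$, build $M'$ in linear time, run Booth and Lueker's PQ-tree algorithm to decide the consecutive ones property and, on success, extract a certifying column order $\tau$; finally report the cyclic closure of $\tau$ as the cycle support, and reject if the test fails. Each step runs in time linear in the size of $G$, so the whole procedure is polynomial, as required by Lemma~\ref{reccirc}.

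The step I expect to need the most care is the equivalence in the reduction, and in particular the verification that complementing exactly the rows meeting $a^*$ is the correct choice: the crucial observation is that, after cutting just after $a^*$, the rows that can wrap around the cut are precisely those containing $a^*$, so complementing them, and only them, turns every wrap-around block into a genuine interval without destroying the rows that were already intervals. Degenerate rows with empty, full, or singleton neighbourhoods should be handled separately but cause no real difficulty, and I would also confirm that the PQ-tree routine returns an explicit ordering rather than a mere yes/no answer, so that the cycle support can actually be produced; alternatively, one could bypass the reduction entirely and appeal to a direct circular-ones test such as the PC-trees of Hsu and McConnell.
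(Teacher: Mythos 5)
Your proposal is correct, but it is worth noting that the paper does not actually prove Lemma~\ref{reccirc} at all: it is stated as a known result and delegated entirely to Buchin et al.~\cite{BKM11}, whose hypergraph-support machinery (the same machinery the paper later uses explicitly in the proof of Lemma~\ref{rectr} for degree-constrained tree supports) handles the cycle-support case. What you have done instead is supply a self-contained, classical proof: the observation that $G$ is circular convex exactly when its bipartite adjacency matrix has the circular ones property, followed by Tucker's complementation reduction (complement precisely the rows meeting a fixed column $a^*$) to the consecutive ones property, decided by Booth--Lueker PQ-trees. I checked the equivalence at the heart of your reduction and it is sound in both directions: cutting the cyclic order immediately after $a^*$ makes every row avoiding $a^*$ a linear interval and makes the complement of every row containing $a^*$ a linear interval (complementation preserves circular arcs), and conversely the cyclic closure of a consecutive-ones order realises circular ones for the original matrix, which is exactly a cycle support. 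Your approach buys an explicit, elementary, linear-time argument in place of a black-box citation; the paper's citation buys uniformity, since the same reference covers both this lemma and Lemma~\ref{rectr}. Two degenerate points you flag should indeed be fixed conventions rather than afterthoughts: a cycle support requires $|A|\geq 3$ (so tiny instances need a separate convention), and vertices $b$ with $N(b)=\varnothing$ satisfy the circular ones property vacuously but induce the empty subgraph, so whether they are admissible depends on the convention for connectedness of the empty graph; neither affects the substance of the proof.
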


We now define a special type of caterpillar.
For an integer $\ell \geq 1$, an {\em $\ell$-caterpillar} is a subcubic tree $T$ on
$2\ell$ vertices with $V(T) = \{ s_1,\ldots,s_\ell,t_1,\ldots,t_\ell \}$,
such that $E(T) = \{ s_it_i \;:\; 1 \leq i \leq \ell \} \cup \{ s_is_{i+1} \;:\; 1 \leq i \leq \ell-1 \}$.
Note that we label the leaves of an $\ell$-caterpillar $t_1,t_2,\dotsc,t_\ell$, in this order. Given a total ordering~$\prec$ of length $\ell$, we say that $(T,\delta)$ is \emph{obtained from $\prec$} if $T$ is an $\ell$-caterpillar and $\delta$ is the natural bijection from the $\ell$ ordered elements to the leaves of $T$.
We can now state our first result.

\begin{theorem}\label{t-ll}
Every circular convex graph $G$ has linear mim-width at most~$2$.
Moreover, we can construct a linear branch decomposition $(T,\delta)$ for $G$ with $\mimw_G(T,\delta) \le 2$ in polynomial time.
\end{theorem}

\begin{proof} Let $G = (A, B, E)$ be a circular convex graph with a circular ordering on $A$. By \Cref{reccirc}, we construct in polynomial time such an ordering $a_{1}, \dots, a_{n}$, where $n = |A|$ (see \Cref{f-example}). Let
$B_1=N(a_n)$
and $B_{2} = B \setminus B_{1}$. We obtain a total ordering $\prec$ on $V(G)$ by extending the ordering $a_{1}, \dots, a_{n}$ as follows. Each $b \in B_{1}$ is inserted after $a_{n}$, breaking ties arbitrarily. Each $b \in B_{2}$ is inserted immediately after the largest element of $A$ it is adjacent to (hence immediately after some $a_{i}$ with $1 \leq i < n$), breaking ties arbitrarily.

Let $T$ be the $|V(G)|$-caterpillar obtained from $\prec$. We will prove that $\mimw_G(T,\delta) \leq 2$. Let $e\in E(T)$.
We may assume without loss of generality that $e$ is not incident to a leaf of $T$.
Let $M$ be a maximum induced matching of $G[A_e, \overline{A_e}]$. As $e$ is not incident to a leaf, we may assume without loss of generality that each vertex in $\overline{A_e}$ is larger than any vertex in $A_e$ in the ordering $\prec$.

We first observe that at most one edge of $M$ has one endpoint in $B_{2}$. Indeed, suppose there exist two edges $xy, x'y' \in M$, each with one endpoint in $B_{2}$, say without loss of generality $\{y, y'\} \subseteq B_{2}$. Since each vertex in $B_{2}$ is adjacent only to smaller vertices, $\{y, y'\} \subseteq \overline{A_e}$ and $\{x, x'\} \subseteq A_e$. Without loss of generality, $y \prec y'$. However, $N(y)$ and $N(y')$ are intervals of the ordering and so either $x \in N(y')$ or $x' \in N(y)$, contradicting the fact that $M$ is induced.

We now show that at most two edges in $M$ have an endpoint in $B_{1}$ and, if exactly two such edges are in $M$, then no edge with an endpoint in $B_{2}$ is. First suppose that three edges of $M$ have one endpoint in $B_{1}$ and let $u_{1}, u_{2}, u_{3}$ be these endpoints. Since $N(u_{1})$, $N(u_{2})$ and $N(u_{3})$ are intervals of the circular ordering on $A$ all containing $a_{n}$, one of these neighbourhoods is contained in the union of the other two, contradicting the fact that $M$ is induced.

Finally, suppose that exactly two edges $u_{1}v_{1}$ and $u_{2}v_{2} \in M$ have one endpoint in $B_{1}$ and thus their other endpoint in $A$. Let $\{u_1, u_2\} \subseteq \overline{A_e}$ and $\{v_1, v_2\} \subseteq A_e$. Then, as each vertex in $\overline{A_e}$ is larger than any vertex in $A_e$ in $\prec$, we find that $u_1$ and $u_2$ belong to $B_1$ and thus $\{v_1, v_2\} \subseteq A$.
Now if there is some edge $u_3v_3 \in M$ such that $u_3 \in B_2$, then $u_3 \in \overline{A_e}$. Recall that $N(u_1)$ and $N(u_{2})$ are intervals of the circular ordering on $A$ both containing $a_{n}$. Since $M$ is induced, for each $i, j \in \{1, 2\}$, we have that $v_{i} \in N(u_{j})$, if $i = j$, and $v_{i} \notin N(u_{j})$, if $i \neq j$. This implies that one of $v_{1}$ and $v_{2}$ is larger than $v_{3}$ in $\prec$ and so it is contained in~$N(u_3)$, contradicting the fact that $M$ is induced. This concludes the proof. \qed
\end{proof}

\noindent
The proof of Theorem~\ref{t-1} is now an easy consequence of the above result.

\medskip
\noindent
\faketheorem{Theorem~\ref{t-1} (restated).}
{\it The mim-width of the class of circular convex graphs is bounded and quickly computable.}

\begin{proof}
By definition, the mim-width of a graph $G$ is at most the linear mim-width of $G$. Hence, the result follows from Theorem~\ref{t-ll}. \qed
\end{proof}

\section{The Proof of Theorem~\ref{bounded}}\label{s-2}

We need the following lemma on recognizing $(t, \Delta)$-tree convex graphs\footnote{Jiang et al.~\cite{JLWX13} proved that {\sc Weighted Feedback Vertex Set} is polynomial-time solvable for triad convex graphs if a triad support is given as input. They observed that an associated tree support can be constructed in linear time, but this does not imply that a triad support can be obtained. \Cref{rectr} shows that indeed a triad support can be obtained in polynomial time and need not be provided on input.}.

\begin{lemma}\label{rectr} For $t, \Delta \in \mathbb{N}$, $(t, \Delta)$-tree convex graphs can be recognized and a $(t, \Delta)$-tree support computed, if it exists, in
$O(n^{t+3})$
time.
\end{lemma}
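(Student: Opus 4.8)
The plan is to reduce the recognition of $(t,\Delta)$-tree convexity to testing the consecutive-ones property (equivalently, recognizing convex graphs), after first guessing the branch vertices of the desired support. Recall that a $(t,\Delta)$-tree $T$ has at most $t$ vertices of degree at least $3$; call these the \emph{branch vertices}. The factor $n^t$ in the target running time suggests exactly this: I would iterate over all sets $R \subseteq A$ with $|R| \le t$ as candidate branch-vertex sets, of which there are $O(n^t)$, and spend $O(n^3)$ time per candidate. When $\Delta \le 2$ a $(t,\Delta)$-tree is a path, so $(t,\Delta)$-tree convexity coincides with ordinary convexity and is handled directly; hence assume $\Delta \ge 3$.

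\textbf{Skeletons and legs.} Suppressing every degree-$2$ vertex of $T$ yields a \emph{skeleton} $S$: a tree whose internal vertices are precisely the branch vertices of $T$ (each of degree between $3$ and $\Delta$) and whose leaves are leaves of $T$. A degree count gives that $S$ has at most $(\Delta-2)t+2$ leaves and hence at most $(\Delta-1)t+1$ edges, so the number of distinct skeleton topologies depends only on $t$ and $\Delta$; I would enumerate all of them (a constant number) and, for each, map the vertices of $R$ onto its internal nodes. Each edge of $S$ then corresponds to a \emph{leg}, a path of $T$ whose interior consists of degree-$2$ vertices from $A\setminus R$. The remaining task, for a fixed $R$ and skeleton, is to distribute the vertices of $A\setminus R$ among the legs and order each leg so that every $N(b)$ induces a connected subtree of the resulting tree.

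\textbf{Reduction to convex recognition.} Since $T$ is a tree, $N(b)$ induces a connected subgraph if and only if it is a subtree, which decomposes into two conditions: (a) on each leg, the vertices of $N(b)$ lying on that leg form a contiguous sub-path, and (b) the branch vertices and legs that $N(b)$ meets form a connected portion of $S$, so that whenever $N(b)$ enters the interiors of two legs incident to a common branch vertex $r$, the vertex $r$ lies in $N(b)$ and each of those intervals abuts $r$. Condition (a) is precisely convexity along each leg, testable by the classical consecutive-ones procedure; condition (b) imposes boundary constraints that couple the legs at the (at most $t$) branch vertices. Since the number of legs is bounded, I would fold these boundary constraints into an ordinary convex-recognition instance per leg---for instance by precomputing, for each $b$, the set of legs it must enter and forcing the corresponding endpoints---and solve the whole arrangement in $O(n^3)$ time. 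A successful run yields a candidate $T$; I would finish by verifying directly that $T$ respects the degree bound $\Delta$ and the branch bound $t$ and that every $N(b)$ is indeed connected.

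\textbf{Complexity and the main obstacle.} Over the $O(n^t)$ choices of $R$, the $O(1)$ skeletons, and the $O(n^3)$ arrangement test, the total running time is $O(n^{t+3})$, as claimed; if no candidate succeeds, then $G$ is not $(t,\Delta)$-tree convex. I expect condition (b) to be the main difficulty: the legs cannot be tested independently, because a neighbourhood spanning several legs through a branch vertex forces the within-leg intervals to reach exactly up to that vertex, and distinct vertices of $B$ may demand conflicting orientations of the same leg. Formulating these coupling constraints precisely and showing that they can be absorbed into standard consecutive-ones tests---so that the per-candidate test remains both polynomial and correct---is the crux; once consistent orderings of the legs are obtained, assembling them into $T$ and checking the support property is routine. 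This also resolves the subtlety noted for triads: a general tree support is easy to build, but controlling the degrees and the number of branch vertices is exactly what the guessing of $R$ secures.
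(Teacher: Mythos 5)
Your outer loop is the same as the paper's: guess the at most $t$ potential branch vertices ($O(n^t)$ candidates) and spend polynomial time per candidate, giving $O(n^{t+3})$ overall. The difference is in the per-candidate test, and that is where your proposal has a genuine gap. The paper does not decompose the tree into a skeleton and legs at all; it simply invokes, as a black box, the algorithm of Buchin et al.~\cite{BKM11}, which decides in $O(|V|^3+|\mathcal{S}||V|^2)$ time whether a hypergraph has a tree support respecting \emph{prescribed per-vertex degree bounds}, and constructs one if so. For each guessed set $A'$ of size $t$ one sets $d_i=\Delta$ for $i\in A'$ and $d_i=2$ for $i\in A\setminus A'$, applies that algorithm to the hypergraph $(A,\{N(b):b\in B\})$, and is done. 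All the difficulty you correctly identify --- distributing the remaining vertices among paths, ordering them, and satisfying the connectivity constraints across branch vertices --- is absorbed by that known result.

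Your proposal, by contrast, tries to re-derive this inner step via skeleton enumeration plus per-leg consecutive-ones tests, and you yourself flag the unresolved part as ``the crux.'' Concretely, two things are missing. First, ``precomputing, for each $b$, the set of legs it must enter'' is circular: which legs $N(b)$ meets depends on how the vertices of $A\setminus R$ are distributed among the legs, and that distribution is precisely what the algorithm has to find; it cannot be read off from $G$ in advance, and the number of distributions is exponential. Second, even with a fixed distribution, the coupling conditions at branch vertices (intervals that must be anchored at a prescribed end of a leg, and leg orientations that must be chosen consistently across all $b\in B$ simultaneously) are not instances of the standard consecutive-ones property, so ``solve the whole arrangement in $O(n^3)$ time'' is an assertion, not an argument. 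This inner problem is exactly the nontrivial content of the degree-constrained tree-support result of~\cite{BKM11}; without either citing such a result or actually carrying out the joint assignment-plus-ordering argument, the proof is incomplete. The fix is easy, though: replace your per-candidate test by a call to the algorithm of~\cite{BKM11} with degree bounds $\Delta$ on $R$ and $2$ elsewhere, which is precisely the paper's proof.
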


\begin{proof} Given a hypergraph $H = (V, \mathcal{S})$ together with degrees $d_{i}$ for each $i \in V$, Buchin et al.~\cite{BKM11} provided an $O(|V|^{3} + |\mathcal{S}||V|^{2})$ time algorithm that solves the following decision problem: Is there a tree support for $H$ such that each
vertex~$i$ of the tree has degree at most $d_{i}$? If it exists, the algorithm computes a tree support satisfying this property. Given as input a bipartite graph $G = (A, B, E)$, we consider the hypergraph $H = (A, \mathcal{S})$, where $\mathcal{S} = \{N(b) : b \in B\}$. For each of the ${|A|\choose t} = O(|A|^{t})$ subsets $A' \subseteq A$ of size~$t$ we proceed as follows: we assign a degree~$\Delta$ to each of its elements and a degree~$2$ to each element in $A \setminus A'$. We then apply the algorithm in \cite{BKM11} to the $O(|A|^{t})$ instances thus constructed. If $G$ is $(t, \Delta)$-tree convex, then the algorithm returns a $(t, \Delta)$-tree support for $H$. \qed
\end{proof}

\begin{figure}[t]
\begin{center}
    \begin{tikzpicture}[scale=0.7]

\foreach \x in {0.5} {
    \vertex{-6}{0+4}{v1};
    \vertex{-6}{3*\x+4}{v2};
    \vertex{-6}{6*\x+4}{v3};
    \vertex{-6}{8*\x+4}{v4};

    \vertex{-4}{2*\x+4}{w1};
    \vertex{-4}{5*\x+4}{w2};
    \vertex{-4}{10*\x+4}{w3};

    \vertex{-2}{1*\x+4}{z1};
    \vertex{-2}{4*\x+4}{z2};
    \vertex{-2}{7*\x+4}{z3};
    \vertex{-2}{9*\x+4}{z4};
    \vertex{-2}{11*\x+4}{z5};
}

\node[label=above:{$V^1$}] at (-6,2) {};
\node[label=above:{$V^2$}] at (-4,2) {};
\node[label=above:{$V^3$}] at (-2,2) {};

\path (v4) edge [e1,bend right=25] (v2); \path (v4) edge [e1]
(v3);

\path (w3) edge [e1] (w2);

\path (z4) edge [e1,bend left=25] (z2); \path (z4) edge [e1] (z3);
\path (z3) edge [e1] (z2);

\path (v2) edge [e1] (w1); \path (v3) edge [e1] (w2); \path (v4)
edge [e1] (w1); \path (v4) edge [e1] (w2); \path (v1) edge [e1]
(w2); \path (v4) edge [e1] (w3); \path (v2) edge [e1] (w2); \path
(v1) edge [e1] (w1);

\path (v1) edge [e1] (z1); \path (v2) edge [e1] (z2); \path (v2)
edge [e1,bend left=15] (z1); \path (v1) edge [e1,bend left=15]
(z2); \path (v2) edge [e1,bend left=15] (z3); \path (v4) edge [e1]
(z4); \path (v3) edge [e1] (z4); \path (v3) edge [e1] (z2); \path
(v3) edge [e1] (z3);

\path (z2) edge [e1] (w1); \path (z3) edge [e1] (w2); \path (z5)
edge [e1] (w3); \path (z5) edge [e1] (w2); \path (z2) edge [e1]
(w2); \path (z1) edge [e1] (w1); \path (z1) edge [e1] (w2);

\end{tikzpicture}
\end{center}
\caption{A $3$-thin representation of a graph. The vertices are
ordered increasingly by their $y$-coordinate, and the classes
correspond to the vertical lines.}\label{fig:thinness}
\end{figure}
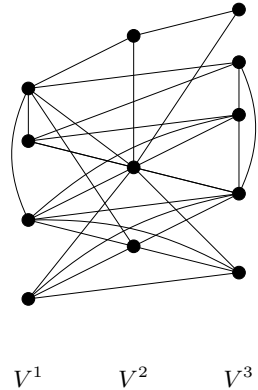

As mentioned in Section~\ref{sec:intro}, we will prove Theorem~\ref{bounded} as a consequence of a stronger result for a less powerful width parameter, namely thinness.
A graph $G=(V,E)$ is \emph{$k$-thin} if there exists an ordering
$v_1, \dots , v_n$ of~$V$ and a partition of~$V$ into $k$ classes
$(V^1,\ldots,V^k)$ such that, for every triple $(r,s,t)$ with
$r<s<t$, the following holds: if $v_r$ and $v_s$ belong to the same
class $V^i$ for some $i\in\{1,\ldots,k\}$ and moreover $v_r v_t \in E$, then we have $v_sv_t \in E$. In this case, the ordering
$v_1, \dots , v_n$ and the partition $(V^1,\dots,V^k)$ are said to be \emph{consistent} and form a {\it $k$-thin representation} for $G$.
 The \emph{thinness} $\thin(G)$ of $G$ is the minimum integer~$k$ such that $G$ is $k$-thin. See Figure~\ref{fig:thinness} for an example of a $3$-thin graph.

The following theorem shows that a class of $(t,\Delta)$-tree convex graphs, for some $t \ge 1$ and $\Delta \ge 3$, has bounded thinness.

\begin{theorem}\label{t-thin}
For every $t, \Delta \in \mathbb{N}$ with $t \geq 1$ and $\Delta\geq 3$,
the class of $(t, \Delta)$-tree convex graphs has thinness at most $2+t(\Delta-2)$, and a corresponding
$(2+t(\Delta-2))$-thinness ordering and partition can be computed in polynomial time.
\end{theorem}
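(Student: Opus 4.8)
The plan is to establish the thinness bound by explicitly constructing a consistent ordering and partition for a $(t,\Delta)$-tree convex graph $G=(A,B,E)$. By \Cref{rectr}, I can first compute in polynomial time a $(t,\Delta)$-tree $T$ with $V(T)=A$ such that $N_G(b)$ induces a subtree of $T$ for each $b\in B$. The strategy mirrors the decomposition used in the proof of \Cref{bounded}: I would root $T$ and exploit the fact that $T$ has at most $t$ vertices of degree at least $3$, each of degree at most $\Delta$. Deleting the $t$ high-degree vertices breaks $T$ into at most $1+t(\Delta-1)$ edge-disjoint paths (the ``path segments'' of the tree). The count $2+t(\Delta-2)$ in the statement suggests the target partition of $V(G)$ into that many thinness classes comes from carefully grouping these path segments together with the $B$-side vertices.

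First I would fix a linear order on $A$ by performing a depth-first traversal of the rooted tree $T$, so that each path segment occupies a contiguous block and, crucially, each subtree occupies a contiguous interval of $A$; since every $N_G(b)$ is a subtree, this DFS order makes each neighbourhood $N_G(b)$ into a set that behaves well with respect to the ancestor structure. Next I would order the vertices of $B$ by inserting each $b$ immediately after the last (in the DFS order) vertex of $A$ it is adjacent to, breaking ties arbitrarily, exactly as in the proof of \Cref{t-1}. This yields a total ordering of $V(G)$. The partition into thinness classes would place each $A$-vertex according to which path segment of $T$ it lies on (grouping segments so as to hit the bound $2+t(\Delta-2)$), and place each $B$-vertex in a class determined by the segment containing the ``lowest'' endpoint of its subtree neighbourhood.

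The heart of the verification is checking the thinness consistency condition: for any triple $v_r \prec v_s \prec v_t$ with $v_r,v_s$ in the same class and $v_rv_t\in E$, one must have $v_sv_t\in E$. The edges of $G$ go between $A$ and $B$, so there are two cases depending on whether $v_t\in A$ or $v_t\in B$. When $v_t\in B$ and $v_r,v_s\in A$ lie in the same class (same group of path segments), the subtree $N_G(v_t)$ contains $v_r$, and I would argue that because $v_r\prec v_s$ and they sit on the same segment with $v_t$ inserted after its last neighbour, the interval structure of the subtree forces $v_s\in N_G(v_t)$ as well. The symmetric case with $v_t\in A$ and $v_r,v_s\in B$ uses the subtree property of neighbourhoods together with the insertion rule for $B$-vertices.

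\textbf{The main obstacle} I anticipate is pinning down the exact grouping of path segments into precisely $2+t(\Delta-2)$ classes while keeping the consistency condition valid across segment boundaries. A naive assignment of one class per path segment gives roughly $t(\Delta-1)+1$ classes, which overshoots the claimed bound; obtaining the sharper constant $2+t(\Delta-2)$ requires merging certain segments that share a high-degree branch vertex into a common class, and then verifying that the DFS ordering still renders each $B$-neighbourhood ``interval-like'' within each merged class. Getting this amortization right, and simultaneously handling the $B$-vertices whose neighbourhoods straddle a branch vertex, is where the real care lies; the polynomial-time computability of the ordering and partition, by contrast, follows immediately since every step (computing $T$, the DFS, the insertion of $B$-vertices, and the class assignment) is polynomial.
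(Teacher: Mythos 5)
Your overall skeleton matches the paper's proof: compute a $(t,\Delta)$-tree support via Lemma~\ref{rectr}, order $A$ by a tree traversal, insert each $b\in B$ immediately after its greatest $A$-neighbour, and derive the thinness classes from the tree structure. However, the two points you leave open are exactly where the construction stands or falls, and your stated choices for them do not work. First, the $B$-vertices: the paper places \emph{all} of $B$ into a single class $V^1$. Combined with the insertion rule (every $b$ comes after all its $A$-neighbours), this makes any triple $v_r<v_s<v_t$ with $v_r,v_s\in B$ and $v_rv_t\in E$ impossible, so the only consistency case left to check is $v_r,v_s\in A$ in a common class and $v_t\in B$. Your rule, assigning each $b$ to ``a class determined by the segment containing the lowest endpoint of its neighbourhood'', fails under either reading. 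If $b$ joins the class of the $A$-vertices of that segment, consistency breaks even inside one segment: take a path $a_1,a_2,a_3$ of $T$ (in traversal order), $v_t\in B$ adjacent to $a_1,a_2,a_3$, and $v_s\in B$ adjacent to $a_1,a_2$ only; the order is $a_1,a_2,v_s,a_3,v_t$, the triple $(a_1,v_s,v_t)$ has $a_1,v_s$ in one class and $a_1v_t\in E$, yet $v_sv_t\notin E$ since $B$ is independent. If instead the $B$-vertices attached to each segment form their own class, the number of classes roughly doubles and overshoots $2+t(\Delta-2)$.

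Second, the class count you flag as ``the main obstacle'' is not obtained by merging the path segments of $T$ minus its branch vertices; the paper instead roots $T$ at a \emph{leaf} and assigns classes to $A$ top-down: the root starts class $V^2$, a vertex with one child passes its class to that child, and a branch vertex passes its class to exactly one child while each of its other (at most $\Delta-2$) children starts a new class. This gives at most $1+t(\Delta-2)$ classes on $A$, each of which is a vertical (ancestor--descendant) path, and together with $V^1=B$ exactly the claimed $2+t(\Delta-2)$. This vertical-path structure is also what the consistency argument consumes, and it forces your third underspecified choice: the traversal must be \emph{postorder}, not a generic DFS. The proof needs that the greatest $A$-neighbour $a''$ of $b$ is the vertex of the subtree $N_G(b)$ closest to the root, so that for $a<a'<b$ with $a,a'$ in one vertical-path class, $a'$ lies on the ascending path from $a$ to $a''$, which is contained in $N_G(b)$. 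With preorder this fails: let $a$ have children $a'$ and $z$, and let $b$ be adjacent to $a$ and $z$ only; preorder gives $a,a',z,b$, and the triple $(a,a',b)$ violates consistency. So, although your plan follows the paper's route, it has genuine gaps at all three decision points: the $B$-class assignment, the mechanism achieving the bound, and the direction of the traversal.
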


\begin{proof}
Let $G=(A,B,E)$ be a $(t, \Delta)$-tree convex graph for some $t\geq 1$ and $\Delta\geq 3$. Let $V^1 = B$. We can compute a $(t,\Delta)$-tree support in polynomial time by Lemma~\ref{rectr}.
We root this tree at an arbitrary leaf
and assign the root to the class $V^2$. If a node has one child,
then it is assigned to the same class as its parent. If it has
more than one child, then one child is assigned to the same class
as its parent, and each of the other children starts a new
class. So, we have at most $2+t(\Delta-2)$ classes (including $V^1$).

We order the vertices of $A$ by postorder of the support tree, and
insert each vertex $b$ of $B$ into the order in such a way that
the greatest vertex of $A$ smaller than it is the greatest
neighbour of $b$ in $A$. Note that each class of vertices
of $A$ is a linearly ordered path of the support tree.

We will now show that the order and the partition are consistent. The vertices
of $B$ are an independent set, the vertices of $A$ are an
independent set, and no vertex of $A$ has a neighbour in $B$
smaller than it. So, suppose that $a < a' < b$, with $a, a' \in V^j$ for
some $j \geq 2$, $b \in V^1$, and $ab \in E$.

Let $a''$ be the greatest neighbour of $b$ in $A$. By construction of the order, $a''$ is the greatest vertex of~$A$ that is smaller than $b$.
Hence, $a < a' \leq a'' < b$ (note that $a'=a''$ is possible). Moreover, $ab
\in E$ and $a''b \in E$. If $a'' \in V^j$, then clearly $a'$
belongs to the unique path joining $a$ and $a''$ in the support
tree, so $a'b \in E$. If $a'' \in V^{j'}$, for $j'\neq j$, then the
common ancestor of $a$ and $a''$ in the support tree is also a
neighbour of $b$. Indeed, since we have ordered the tree by
postorder, the common ancestor is precisely $a''$. Also in this
case, $a'$ belongs to the unique path joining $a$ and $a''$ in the
support tree, so $a'b \in E$.

From the above we conclude that $G$ has thinness at most $2+t(\Delta-2)$. Moreover,  we can construct the
$(2+t(\Delta-2))$-thinness ordering and partition above in polynomial time. \qed
\end{proof}

\noindent
We also need the following known result, which shows that linear mim-width (and thus mim-width) is a more powerful width parameter than thinness.

\begin{lemma}[\cite{BE19}]\label{l-mwt}
For every graph $G$, the linear mim-width of $G$ is at most the thinness of $G$. Moreover, from every $k$-thin representation for $G$ we can obtain, in linear time, a linear branch decomposition for $G$.
\end{lemma}

\noindent
We are now ready to prove Theorem~\ref{bounded}.

\medskip
\noindent
\faketheorem{Theorem~\ref{bounded} (restated).}
{\it  For every $t, \Delta \in \mathbb{N}$ with $t \geq 1$ and $\Delta\geq 3$,
the mim-width of the class of  $(t, \Delta)$-tree convex graphs is bounded and quickly computable.}

\begin{proof}
By definition, the mim-width of a graph $G$ is at most the linear mim-width of~$G$.
Combining Theorem~\ref{t-thin} with Lemma~\ref{l-mwt} immediately yields that for every $t, \Delta \in \mathbb{N}$ with $t \geq 1$ and $\Delta\geq 3$,
the class of $(t, \Delta)$-tree convex graphs has (linear) mim-width at most $2+t(\Delta-2)$,
and that a corresponding branch decomposition can be computed in polynomial time. \qed
\end{proof}

\section{The Proof of Theorem~\ref{t-3}}\label{s-3}

For proving Theorem~\ref{t-3}, we need the following lemma.

\begin{lemma}[see Wang et al.~\cite{WCLT14}]\label{constr} Let $G = (A, B, E)$ be a bipartite graph and $G'$ be the bipartite graph obtained from $G$ by making $k$ new vertices complete to $B$. If $k = 1$, then $G'$ is star convex. If $k = |A|$, then $G'$ is comb~convex.
\end{lemma}

\noindent
As mentioned in Section~\ref{sec:intro}, we prove Theorem~\ref{t-3} as a consequence of as stronger result, which requires some extra terminology given below.

Consider a branch decomposition $(T,\delta)$ for a graph~$G$.
Recall that every edge $e \in E(T)$ partitions the leaves of $T$ into the classes $L_e$ and $\overline{L_e}$ and that
$e$ induces a partition $(A_e, \overline{A_e})$ of $V(G)$, where $\delta(A_e) = L_e$ and $\delta(\overline{A_e}) = \overline{L_e}$.
Let $\cutsim_{G}(A_{e}, \overline{A_{e}})$ be the size of a maximum induced matching $M$ in $G$ such that every edge of $M$ has one end-vertex in $A_e$ and the other end-vertex in $\overline{A_e}$.
Then the \emph{sim-width} $\simw_{G}(T, \delta)$ of $(T, \delta)$ is the maximum value of $\cutsim_{G}(A_{e}, \overline{A_{e}})$ over all edges $e\in E(T)$. The \emph{sim-width} $\simw(G)$ of $G$ is the minimum value of $\simw_{G}(T, \delta)$ over all branch decompositions $(T, \delta)$ for $G$.

We now prove the following result.

\begin{theorem}\label{t-simw}
The class of star convex graphs and the class of comb convex graphs have unbounded sim-width.
\end{theorem}

\begin{proof}
We use the fact that grids are bipartite and have unbounded sim-width~\cite{KKST17}.
Moreover, just as for mim-width, it is readily seen that adding a vertex to a graph does not decrease the sim-width.
Hence, we can apply \Cref{constr}: for any grid on partition classes $A$ and $B$, by adding a vertex complete to $B$ we obtain a star convex graph, and by adding $|A|$ new vertices complete to $B$ we obtain a comb convex graph.
Thus the class of star convex graphs and the class of comb convex graphs have unbounded sim-width. \qed
\end{proof}

\noindent
We also need the following observation of Kang, Kwon, Str{\o}mme and Telle.

\begin{lemma}[\cite{KKST17}]\label{l-kk}
For every graph $G$, the sim-width of $G$ is at most the mim-width of $G$.
\end{lemma}

\noindent
We can now prove Theorem~\ref{t-3}.

\medskip
\noindent
\faketheorem{Theorem~\ref{t-3} (restated).}
{\it The class of star convex graphs and the class of comb convex graphs have unbounded mim-width.}

\begin{proof}
The result immediately follows from Theorem~\ref{t-simw} after applying Lemma~\ref{l-kk}. \qed
\end{proof}

\begin{figure}
\begin{center}
\begin{tikzpicture}[yscale=.7]
\node at (0,10) (p) {sim-width};
\node at (0.4,9.4) (ref1) {\cite{KKST17}};
\node at (-3,8.5) (b) {};
\node at (0,8.5) (wc) {mim-width};
\node at (0.4,7.9) (ref2) {\cite{Va12}};
\node at (-3,7) (pe) {linear mim-width};
\node at (-6,7) (cb) {};
\node at (0,7) (c) {clique-width};
\node at (0.4,6.4) (ref3) {\cite{CO00}};
\node at (-2.7,6.4) (ref4) {\cite{BE19}};
\node at (-2.54,3.1) (ref5) {{\small $[$Section~\ref{s-new}$]$}};
\node at (3,7) (dh) {};
\node at (-6,5.5) (bp) {};
\node at (-1.5,5.5) (s) {};
\node at (0,5.5) (i) {treewidth};
\node at (0,2.5) (ui) {path-width};
\node at (-3,5.5) (spe) {thinness};
\node at (-3,4) (pt) {proper thinness};
\draw [->](pe) -- (spe);
\draw [->] (spe) -- (pt);
\draw [->] (pt) -- (ui);
\draw [->] (wc) -- (pe);
\draw [->] (p) -- (wc);
\draw [->] (wc) -- (c);
\draw [->] (c) -- (i);
\draw [->] (i) -- (ui);
\end{tikzpicture}
\end{center}
\caption{\label{f-power}
The relationships between the different width parameters that we consider in Section~\ref{s-refined}.
Parameter $p$ is more powerful than parameter $q$ if and only if there exists a directed path from $p$ to $q$. To explain the incomparabilities, proper interval graphs have proper thinness~1~\cite{BE19} and unbounded clique-width~\cite{GR00}, whereas trees have tree-width~1 and unbounded linear mim-width~\cite{HTV19}.
Unreferenced arrows follow from the definitions of the width parameters involved
except for the arrow from proper thinness to path-width whose proof we give in Section~\ref{s-new}.\\[15pt]}
\centering
\begin{tikzpicture}[scale=0.74, every node/.style={transform shape}]
\node[rectangle,draw] (bipartite) at (6,16) {bipartite};
\node[rectangle,draw] (treeconvex) at (1,15) {tree convex $\equiv$ $(\infty, \infty)$-tree convex};
\node[rectangle,draw] (circularconvex) at (10,10.5) {circular convex};
\node[rectangle,draw] (tr) at (5,9) {$(t, \Delta)$-tree convex, $t\geq 1, \Delta \geq 3$};
\node[rectangle,draw] (inftyr) at (1,13.5) {$(\infty, 3)$-tree convex};
\node[rectangle,draw] (combconvex) at (2.5,12) {comb convex};
\node[rectangle,draw] (starconvex) at (-2,12) {star convex $\equiv$ $(1, \infty)$-tree convex}; 
\node[rectangle,draw] (chordalbip) at (7,13.5) {chordal bipartite};
\node[rectangle,draw] (triad) at (5,7.5) {triad convex};
\node[rectangle,draw] (convex) at (5,6) {convex};
\node (classI) at (-1.2,16) {{\bf unbounded sim-width}};
\node (classI) at (-0.2,10.5) {{\bf bounded linear mim-width, unbounded thinness}};
\node (classI) at (-1.2,8) {{\bf bounded thinness,}};
\node (classI) at (-1.2,7.5) {{\bf unbounded proper thinness,}};
\node (classI) at (-1.2,7) {{\bf unbounded clique-width}};

\draw[-] (bipartite) ..controls (10,14).. (circularconvex)
(bipartite) -- (treeconvex)
(treeconvex) ..controls (6,14)..  (chordalbip)
(treeconvex) ..controls (5,13).. (tr)
(starconvex) ..controls (-2.5,13).. (treeconvex)
(combconvex) -- (inftyr)
(inftyr) -- (treeconvex)
(tr) -- (triad)
(triad) -- (convex)
(convex) ..controls (9.1,9.2).. (chordalbip)
(convex) ..controls (10,8)..  (circularconvex);

\draw [dashed,thick] (-5,11.25) to (11.5,11.25)
(-5,9.75) to (11.5,9.75);
\end{tikzpicture}
\caption{The inclusion relations between the classes we consider.
A line from a lower-level class to a higher one means the first class is contained in the second.}
\label{fig:dia2}
\end{figure}

\section{A Refined Parameter Analysis}\label{s-refined}

We proved Theorems~\ref{t-1}--\ref{t-3} by showing stronger results for other width parameters.
In this section, we extend this more refined analysis on width parameters for the graph classes listed in Figure~\ref{fig:dia}. We will consider the graph width parameters listed in Figure~\ref{f-power}, which are path-width, treewidth, clique-width, (linear) mim-width, sim-width, thinness and proper thinness. The results in this section are summarized in Figure~\ref{fig:dia2}.
Note that we provide a {\it complete} picture with respect to the width parameters and graph classes considered.

We prove the three parts in Figure~\ref{fig:dia2} that are separated by the dotted lines in Sections~\ref{s-row1}--\ref{s-row3}, respectively; note that some results are shown already in the previous sections.
Afterwards, we prove in Section~\ref{s-new} the (only) unreferenced arrow in Figure~\ref{f-power}.

\subsection{Bounded Thinness but Unbounded Proper Thinness and Clique-Width}\label{s-row1}

In this section we consider the bottom row in Figure~\ref{fig:dia2}. In Section~\ref{s-2}, we already proved that for every $t, \Delta \in \mathbb{N}$ with $t \geq 1$ and $\Delta\geq 3$, the class of $(t, \Delta)$-tree convex graphs has bounded thinness (Theorem~\ref{t-thin}). Hence, it remains to show that convex graphs have unbounded proper thinness and unbounded clique-width.

A graph $G=(V,E)$ is \emph{proper $k$-thin} if there exists an ordering $v_1, \dots , v_n$ of~$V$ and a partition of~$V$ into $k$ classes $(V^1,\dots,V^k)$ such that for each triple $(r,s,t)$ with $r<s<t$ the following holds:
\begin{itemize}
\item if the vertices $v_r$ and $v_s$ are in the same class $V^i$ for some $i\in \{1,\ldots,k\}$ and $v_r v_t \in E$, then $v_sv_t \in E$; and
\item if the vertices $v_s$ and $v_t$ are in the same class $V^i$ for some $i\in \{1,\ldots,k\}$ and $v_rv_t\in E$, then $v_rv_s \in E$.
\end{itemize}
In this case, the ordering $v_1,\ldots,v_n$ and the partition $(V^1,\ldots,V^k)$ are \emph{strongly consistent}. The \emph{proper thinness} $\pthin(G)$ of $G$ is the minimum integer~$k$ such that $G$ is proper $k$-thin.
We cannot strengthen Theorem~\ref{t-thin} to proper thinness, nor to clique-width, due to the following result for convex graphs.

\begin{theorem}\label{proper}
The class of convex graphs has unbounded proper thinness and unbounded clique-width.
\end{theorem}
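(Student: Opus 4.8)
The plan is to prove both lower bounds by realising convex graphs as the \emph{point--interval incidence graphs} of interval representations, and then transferring to them the known hardness of (unit) interval graphs. Given an interval graph $I$ together with an interval representation, I would let $A$ be the set of all interval endpoints, ordered along the line, and let $B$ be the set of intervals, making each interval $J\in B$ adjacent to exactly those points of $A$ that it contains. The neighbourhood of every $J$ is then a contiguous block of the ordering of $A$, so the resulting bipartite graph $B(I)$ is convex by construction. Moreover $B(I)$ \emph{encodes} $I$, since two intervals intersect precisely when the corresponding $B$-vertices have a common neighbour in $A$ (after the routine insertion of a few extra points to deal with touching intervals). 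Hence it suffices to show that, as $I$ ranges over a suitable family of interval graphs, both the clique-width and the proper thinness of $B(I)$ grow without bound.

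For the clique-width bound I would argue \emph{softly}, through monadic second-order transductions. Since clique-width and rank-width are equivalent \cite{OS06}, and bounded clique-width is preserved under $\mathsf{MSO}_1$-transductions (Courcelle and Oum), it is enough to exhibit an $\mathsf{MSO}_1$-transduction from the class $\{B(I)\}$ onto a class of unbounded clique-width. The interval graph $I$ is itself such a transduction of $B(I)$: after guessing the bipartition $(A,B)$ as a unary predicate, one declares an edge between two $B$-vertices exactly when they have a common neighbour in $A$, and then restricts the domain to $B$. Letting $I$ range over the unit interval graphs, which have unbounded clique-width \cite{GR00}, the transduction forces $\{B(I)\}$, and hence the whole class of convex graphs, to have unbounded clique-width.

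For proper thinness a transduction argument is unavailable, and, crucially, the bound cannot be inherited from any weaker parameter: by \cite{BV13} convex graphs have mim-width at most~$1$, and their thinness is bounded, so none of these yields any lower bound on proper thinness. I would instead argue directly. Fixing a family of convex graphs $B(I)$ in which many $B$-vertices have pairwise nested or crossing neighbourhoods, I would call two such vertices \emph{in conflict} if no strongly consistent ordering and partition can place them in a common class. By examining the two strong-consistency conditions on triples consisting of two $B$-vertices together with a point that separates (or lies between) their neighbourhoods, I would show that a large set of $B$-vertices is pairwise in conflict; since conflicting vertices must then receive distinct classes, the proper thinness is at least the size of such a set, which can be made arbitrarily large.

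The main obstacle is the proper-thinness lower bound. The clique-width statement reduces cleanly to the known result for unit interval graphs, but the proper-thinness argument requires a conflict pattern that survives \emph{every} ordering and \emph{every} class assignment---in particular assignments that mix $A$- and $B$-vertices in one class, since the partition is not required to respect the bipartition. The delicate point will be to ensure that the adversary cannot reorder the separating points and the nested or crossing $B$-vertices so as to dodge both strong-consistency conditions at once; this is precisely where the exact interval geometry of the construction must be exploited, and pinning down an obstruction robust to all such reorderings is the heart of the proof.
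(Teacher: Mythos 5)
Your clique-width argument is sound, though heavier than necessary: realising unit interval graphs as $\mathsf{MSO}_1$-transductions of their point--interval incidence graphs (which are convex by construction) does transfer unboundedness of clique-width \cite{GR00} to convex graphs, since bounded clique-width is preserved under such transductions (Courcelle--Oum). The paper reaches the same conclusion in one line, by observing that bipartite permutation graphs form a subclass of convex graphs and are already known to have unbounded clique-width; your route is a valid, genuinely different, but less elementary alternative.

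The proper-thinness half, however, has a genuine gap, and you have in effect flagged it yourself: no obstruction is actually constructed. Worse, the pairwise-conflict scheme you propose cannot in general certify the lower bound, because proper thinness is not governed by any conflict graph. A concrete witness is $K_{1,3}$, which is convex (three leaves forming the $A$-side, the centre on the $B$-side) and has proper thinness $2$, yet no two of its vertices are \emph{in conflict} in your sense: with leaves $a_1,a_2,a_3$ and centre $u$, the order $a_1<a_2<a_3<u$ with classes $\{a_1,a_2,a_3\}$ and $\{u\}$ is strongly consistent and places any two leaves in a common class, while the order $a_1<a_2<u<a_3$ with classes $\{a_1,a_2\}$ and $\{u,a_3\}$ is strongly consistent and places the centre in a common class with a leaf. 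So the largest pairwise-conflict set has size $1$ while $\pthin(K_{1,3})=2$: pairwise conflicts can be strictly weaker than the parameter, and you would need to prove that convex graphs contain arbitrarily large pairwise-conflict sets --- exactly the step you leave open, and one that this example makes doubtful. The paper's proof is instead global and inductive: it first characterises strong consistency (an order and partition are strongly consistent if and only if, for every vertex $v$ and every class $V^j$, the set $N[v]\cap(V^j\cup\{v\})$ is consecutive within $V^j\cup\{v\}$), then builds convex graphs $G_k$ recursively, where $G_k$ consists of three disjoint copies of $G_{k-1}$ plus a vertex $u$ complete to the union of their $A$-sides (note $G_2=K_{1,3}$), and proves $\pthin(G_k)=k$ by showing that the class containing $u$ must miss at least one of the three copies, so that induction applied to that copy forces at least $k$ classes overall. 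Any repair of your interval-geometry idea would have to take this recursive, non-pairwise form.
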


\begin{proof}
For the second part of the statement, we note that the class of bipartite permutation graphs, which form a subclass of convex graphs (see, e.g., \cite{BLS99}), has unbounded clique-width~\cite{BL03}. Hence, it remains to show the first part of the statement.

Given a vertex order $<$ and a subset $S \subseteq V(G)$, a set of
vertices $X \subseteq S$ is \emph{consecutive in $S$ according to
$<$} if there is no $z$ in $S \setminus X$ such that $\min(X) < z
< \max(X)$. Notice that for each vertex $u \in S$, there are at
most two vertices $x$ in $S$ such that $\{u, x\}$ is consecutive in
$S$ according to $<$. Namely, if $S = s_1 < \dots < s_r$ and $u =
s_i$, such vertices are $s_{i-1}$ when $i > 1$ and $s_{i+1}$ when
$i < r$.

We prove the following claim.

\medskip
\noindent
{\it Claim~1.} An order $<$ and a partition $V^1, \dots, V^k$ are strongly
consistent if and only if for every $v \in V(G)$ and every $1 \leq
j \leq k$, the set $N[v] \cap (V^j \cup \{v\})$ is consecutive in $V^j
\cup \{v\}$ according to $<$. In particular, $N[v] \cap V^j$ is
consecutive in $V^j$ according to $<$.

\medskip
\noindent We prove Claim~1 as follows. $\Rightarrow$) Let $v \in
V(G)$ and $1 \leq j \leq k$. Let $X = N[v] \cap (V^j \cup \{v\})$
and suppose there is some $z\in (V^j\cup \{v\})\setminus X$ (or
equivalently, $z\in V^j\setminus X$ as $v\in X$) such that
$\min(X) < z < \max(X)$. If $v < z$, then $v < z < \max(X)$; $z,
\max(X) \in V^j$; $v\max(X) \in E(G)$; and $vz \not \in E(G)$,
contradicting that the order and partition are strongly
consistent. If $v > z$, then $\min(X) < z < v$; $z, \min(X) \in
V^j$; $v\min(X) \in E(G)$; and $vz \not \in E(G)$, again
contradicting that the order and partition are strongly
consistent.

$\Leftarrow$) Let $r < s < t$ such that $rt \in E(G)$. Suppose
first $r,s \in V^j$ for some $1 \leq j \leq k$. Since the vertices
in $N[t] \cap (V^j \cup \{t\})$ are consecutive in $V^j \cup
\{t\}$ according to $<$, $r, t \in N[t] \cap (V^j \cup \{t\})$,
and $s \in V^j$, then $s \in N[t]$. The proof for $s,t \in V^j$
for some $1 \leq j \leq k$ is analogous by using the property for
$N[r]$.

As an immediate consequence, we have that $N[v] \cap V^j$ is
consecutive in $V^j$ according to~$<$. This completes the proof of Claim~1.

\medskip
\noindent
Let $\{G_k\}_{k\geq 1}$ be a family of bipartite convex graphs
defined recursively as follows: $G_1$ is the trivial graph, with the partition $(A,B)$ of $V(G_1)$ such that $|A|=1$ and $B = \varnothing$.
For $k \geq 2$, we define $G_k = (A,B)$ from the
disjoint union of three copies $H_i=(A_i,B_i)$, $i = 1,2,3$,  of
$G_{k-1}$, by adding a new vertex $u$ that we make complete to $A = A_1 \cup A_2
\cup A_3$ (thus, $B = B_1 \cup B_2 \cup B_3 \cup \{u\}$). Note
that for every $k \geq 2$, both $A$ and $B$ are nonempty. The graphs $G_1$, $G_2$ and $G_3$ are displayed in Figure~\ref{fig:thinness2}.

\begin{figure}
\begin{center}
    \begin{tikzpicture}[scale=0.7]

    \vertex{-1}{1}{v1};

    \vertex{2}{0}{w1};
    \vertex{2}{1}{w2};
    \vertex{2}{2}{w3};
    \vertex{4}{1}{w4};

    \vertex{6}{0.5}{z1};
    \vertex{6}{1}{z2};
    \vertex{6}{1.5}{z3};
    \vertex{8}{1}{z4};

    \vertex{6}{0}{x1};
    \vertex{6}{-0.5}{x2};
    \vertex{6}{-1}{x3};
    \vertex{8}{-0.5}{x4};

    \vertex{6}{2}{y1};
    \vertex{6}{2.5}{y2};
    \vertex{6}{3}{y3};
    \vertex{8}{2.5}{y4};

    \vertex{8}{1.75}{t0};

\node[label=above:{$G_1$}] at (-1,-3) {};
\node[label=above:{$G_2$}] at (3,-3) {};
\node[label=above:{$G_3$}] at (7,-3) {};

\path (w4) edge [e1] (w1); \path (w4) edge [e1] (w2); \path (w4)
edge [e1] (w3);

\path (z4) edge [e1] (z1); \path (z4) edge [e1] (z2); \path (z4)
edge [e1] (z3);

\path (x4) edge [e1] (x1); \path (x4) edge [e1] (x2); \path (x4)
edge [e1] (x3);

\path (y4) edge [e1] (y1); \path (y4) edge [e1] (y2); \path (y4)
edge [e1] (y3);

\path (t0) edge [e1] (z1); \path (t0) edge [e1] (z2); \path (t0)
edge [e1] (z3);

\path (t0) edge [e1] (x1); \path (t0) edge [e1] (x2); \path (t0)
edge [e1] (x3);

\path (t0) edge [e1] (y1); \path (t0) edge [e1] (y2); \path (t0)
edge [e1] (y3);

\end{tikzpicture}
\end{center}
\caption{The graphs $G_1$, $G_2$, $G_3$, from the family of graphs $\{G_k\}_{k\geq 1}$ in the proof of
Theorem~\ref{proper}. All the graphs in this family are convex and
their proper thinness increases with $k$.}\label{fig:thinness2}
\end{figure}
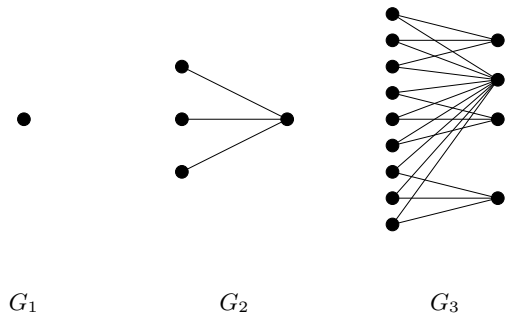

It remains to show the following claim.

\medskip
\noindent
{\it Claim 2.}
For every $k \geq 1$, $\pthin(G_k) = k$.

\medskip
\noindent
We first prove by induction that $\pthin(G_k)\geq k$.
For $k=1$, the statement is
true, as $G_1$ is the trivial graph.

Let $k \geq 2$. The graph
$G_k = (A,B)$ arises from the disjoint union of three copies
$H_i=(A_i,B_i)$, $i = 1,2,3$,  of $G_{k-1}$, by adding a vertex
$u$ complete to $A = A_1 \cup A_2 \cup A_3$ (thus $B = B_1 \cup
B_2 \cup B_3 \cup \{u\}$). Notice that for $v \in A_i$, $w \in
A_j$, $i \neq j$, it holds that $N[v] \cap N[w] = \{u\}$. On the other
hand, every vertex of $H_i$ belongs to $N[v]$ for some $v \in
A_i$.

Now let $<$ be an ordering of $V(G_k)$ and let $(V^1,\ldots,V^r)$, for some integer $r$, be a partition of $V(G_k)$, such that
 $<$ and $(V^1, \dots, V^r)$ are strongly
consistent. Let $j$ be such that $u \in V^j$. There are at
most two vertices $x$ in $V^j$ such that $\{u, x\}$ is consecutive in
$V^j$ according to $<$, so
there is at
least one value $i$, $1 \leq i \leq 3$, such that none of them belongs to $H_i$.
We claim that $V^j \cap V(H_i)
= \varnothing$. Suppose, on the contrary, that there exists a vertex $w$ in $V(H_i) \cap V^j$, and let $v \in A_i$ such that $w \in N[v]$.
Then $N[v]$ in $V^j$, which contradicts Claim~1.

Finally, since $<$ and the partition restricted to $V(H_i)$
are also strongly consistent, and, by the inductive hypothesis,
$\pthin(H_i) = k-1$, it follows that $r \geq k$, so $\pthin(G_k)
\geq k$.

We now prove that $\pthin(G_k) \leq k$. In order to do this, we will inductively build
an ordering and a partition of $V(G_k)$ into $k$ classes that are
strongly consistent, and such that $V^1 = A$. For $k=1$, this is
straightforward.
For $k \geq 2$, by induction, we can find an ordering and a partition into $k-1$ classes for the vertex set of each $H_i \cong G_{k-1}$, such that the ordering and partition are strongly consistent.  We concatenate the orderings, and take the union of the corresponding classes, to obtain an ordering and partition for $G_k - u$.  In particular, $V^1 = A_1 \cup A_2 \cup A_3$.
Finally, we create a new class
$V^k = \{u\}$, and make $u$ the greatest vertex in the order. It
is readily seen that the order and the partition that we created are strongly
consistent for~$G_k$. \qed
\end{proof}

\subsection{Bounded Linear Mim-Width but Unbounded Thinness}\label{s-row2}

By Theorem~\ref{t-ll}, the class of circular convex graphs has bounded linear mim-width. Below we prove that circular convex graphs have unbounded thinness. Hence, for circular convex graphs we cannot obtain the same result as for $(t,\Delta)$-tree convex graphs in Theorem~\ref{t-thin}.

\begin{theorem}\label{circular2}
The class of circular convex graphs has unbounded thinness.
\end{theorem}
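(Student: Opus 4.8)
The plan is to prove the two halves of Theorem~\ref{circular2} separately, since they concern opposite directions (an upper bound on linear mim-width and a lower bound on thinness).

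For the \emph{bounded linear mim-width} part, I would reuse the construction from the proof of Theorem~\ref{t-1} essentially verbatim. Recall that there we built a total ordering $\prec$ on $V(G)$ extending the circular ordering $a_1,\dots,a_n$ on $A$, and took $T$ to be the $|V(G)|$-caterpillar obtained from $\prec$. The crucial observation is that a caterpillar \emph{is} a linear branch decomposition in the sense defined just above Corollary~\ref{c-bounded}. So the branch decomposition $(T,\delta)$ produced in Theorem~\ref{t-1} is already a \emph{linear} branch decomposition, and we proved there that $\mimw_G(T,\delta)\le 2$. Hence the linear mim-width of every circular convex graph is at most $2$, which gives boundedness immediately with no further work; I would simply point back to the proof of Theorem~\ref{t-1} and note that the caterpillar it constructs witnesses linear mim-width $\le 2$.

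\textbf{The harder and more interesting part is showing unbounded thinness.} Here the plan is to exhibit, for every $k$, a circular convex graph with thinness larger than $k$. I would look for a family of circular convex graphs that is ``circularly symmetric'' enough to defeat any linear vertex ordering: thinness requires a \emph{total} order on the vertices that is consistent with a $k$-partition, and the wrap-around structure of a cycle support should obstruct this. A natural candidate is a graph whose $B$-side neighbourhoods are all the arcs of a fixed length on a cycle $a_1,\dots,a_n$ (cyclically), or a suitable blow-up thereof; by circular convexity this is clearly circular convex, but any linear ordering of $A$ must ``cut'' the cycle somewhere, and the interval immediately across that cut behaves badly with respect to the consistency condition. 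The main obstacle is turning this intuition into a clean counting argument: I would need to show that, for any ordering $v_1,\dots,v_m$ and any partition into $k$ classes, the triple condition ($v_r,v_s$ in the same class with $r<s<t$ and $v_rv_t\in E$ forcing $v_sv_t\in E$) is violated whenever $k$ is below some threshold growing with $n$.

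Concretely, the key step I expect to be decisive is a lower-bound lemma of the following shape: identify within the $B$-side a large set of vertices whose neighbourhoods are pairwise ``crossing'' arcs, and argue that two same-class $A$-vertices forced adjacent by the wrap-around cannot coexist with the interval structure demanded by consistency, so each such $B$-vertex forces a fresh class. Establishing this crossing/forcing combinatorics, and in particular ruling out that a clever choice of the linear cut makes all the bad triples disappear, is where the real content lies; the circular convexity of the construction itself is immediate from the definition, and the upper-bound half is essentially free from Theorem~\ref{t-1}.
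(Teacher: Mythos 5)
Your first half is correct and is exactly what the paper does: the branch decomposition built in the proof of Theorem~\ref{t-1} is a caterpillar, hence a linear branch decomposition, so circular convex graphs have linear mim-width at most~$2$. No issue there.

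The second half, however, has a genuine gap: you never actually prove that any circular convex family has unbounded thinness. Your text is a plan for a proof --- pick a ``circularly symmetric'' family, argue that any linear order must cut the cycle, and then carry out some crossing/forcing counting argument --- but you explicitly defer the counting argument itself (``the main obstacle is turning this intuition into a clean counting argument \dots\ where the real content lies''). That deferred step \emph{is} the theorem; without it nothing is established, and your candidate family is not even pinned down (which arc length, which blow-up). The paper avoids this work entirely by choosing a concrete family for which the lower bound is already known: the crown graphs $H_n$ (complete bipartite graphs minus a perfect matching). Each $B$-vertex of $H_n$ has neighbourhood $A\setminus\{a_i\}$, which is a contiguous arc of the circular order on $A$, so crowns are readily seen to be circular convex; and crowns are known to have unbounded thinness, which the paper cites rather than reproves. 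Note that your proposed construction with all arcs of length $n-1$ on the cycle $a_1,\dots,a_n$ is precisely the crown graph, so the cleanest repair of your argument is to specialize to that case and invoke (or reprove) the known lower bound on the thinness of crowns, rather than attempt a from-scratch argument for a generic ``fixed arc length'' family --- for short arcs it is not even clear your intuition holds, since the closely related convex graphs (path supports) have thinness at most~$2$ by Theorem~\ref{t-thin}.
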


\begin{proof}
The {\it crown} $H_n$ is the graph on $2n$ vertices that is obtained from a complete bipartite graph after removing a perfect matching. The class of crown graphs has unbounded thinness~\cite{BGOSS} and is readily seen to be circular convex. \qed
\end{proof}

\subsection{Unbounded Sim-Width}\label{s-row3}

From Theorem~\ref{t-simw} we know that both the class of star convex graphs and the class of comb convex graphs have unbounded sim-width.
Hence, it remains to prove that chordal bipartite graphs have unbounded sim-width, which we do below.
For two graphs $H_1$ and $H_2$, a graph $G$ is \textit{$(H_1,H_2)$-free} if $G$ has no induced subgraph isomorphic to $H_1$ or $H_2$.

\begin{theorem}\label{t-chordalbip}
The class of chordal bipartite graphs has unbounded sim-width.
\end{theorem}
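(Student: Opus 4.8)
The plan is to produce a family of chordal bipartite graphs whose sim-width grows without bound. The first point to record is that the result we are strengthening, namely that chordal bipartite graphs have unbounded mim-width~\cite{BCM15}, does not suffice on its own: from the definitions $\cutsim_G(A_e,\overline{A_e}) \le \cutmim_G(A_e,\overline{A_e})$ for every cut, since every matching witnessing sim-width is in particular an induced matching of the cut graph $G[A_e,\overline{A_e}]$; hence $\simw(G) \le \mimw(G)$ for every $G$, and unboundedness of mim-width carries no information about sim-width. A second point is that, unlike in the proof of \Cref{t-simw}, we cannot fall back on grids: a grid contains an induced $C_8$ (the boundary of any $3\times 3$ subgrid), so it is not chordal bipartite, and since sim-width does not increase under taking induced subgraphs, no chordal bipartite graph can contain a large grid as an induced subgraph. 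Thus a genuinely new construction, or a sharper analysis of an existing one, is required.

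My plan is to revisit the chordal bipartite graphs $\{G_k\}$ of \cite{BCM15} that witness unbounded mim-width and to prove the stronger statement $\simw(G_k) \to \infty$ directly. The skeleton is the standard one for width lower bounds: fix an arbitrary branch decomposition $(T,\delta)$ of $G_k$, use the subcubic structure of $T$ to select an edge $e \in E(T)$ whose cut $(A_e,\overline{A_e})$ splits a suitable core of $G_k$ into two large parts, and then exhibit a large induced matching crossing this cut. The mim-width analysis of \cite{BCM15} already yields a large matching $M$ in the cut graph $G_k[A_e,\overline{A_e}]$; the task is to promote $M$ to a matching that is induced in $G_k$ itself, that is, one with no edge of $G_k$ — crossing or non-crossing — between endpoints of distinct edges of $M$, while keeping $|M|$ growing with $k$.

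The main obstacle is exactly this upgrade to global inducedness. For mim-width one only forbids crossing edges between matched pairs, whereas a sim-matching must also avoid edges lying entirely inside $A_e$ or inside $\overline{A_e}$; in a chordal bipartite graph these are the potentially numerous chords forced by the absence of long induced cycles, so they cannot be ignored. The plan is to exploit chordal bipartiteness here rather than fight it: if two matched edges were joined by such a non-crossing edge, they would sit on a short cycle, and the $C_{\ge 6}$-chord property constrains how such configurations can occur, forcing any interference to be local. By choosing the matched pairs to be pairwise far apart in the core structure of $G_k$, one should be able to thin $M$ to a sub-matching that is induced in all of $G_k$ and still has size $\Omega(|M|)$. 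Should the particular family of \cite{BCM15} not admit such control, the fallback is to design a tailored chordal bipartite family from a totally balanced ($\Gamma$-free) adjacency matrix engineered so that every balanced cut already crosses a globally induced matching of unbounded size; verifying total balancedness of that construction would then be the remaining technical step.
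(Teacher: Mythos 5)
Your preliminary observations are sound: indeed $\simw(G)\le\mimw(G)$ for every graph $G$, so \cite{BCM15} alone does not suffice, and grids are unusable here since they contain induced $C_8$'s and are therefore not chordal bipartite. However, the heart of your plan --- promoting a matching $M$ that is induced in the cut graph $G[A_e,\overline{A_e}]$ to a sub-matching induced in $G$ itself --- is never actually carried out, and the mechanism you propose for it is wrong. Suppose $a_ib_i$ and $a_jb_j$ are two matched pairs, with $a_i,a_j\in A_e$ and $b_i,b_j\in\overline{A_e}$, and the interference is a non-crossing edge $a_ia_j$. Cut-inducedness already guarantees $a_ib_j,a_jb_i\notin E(G)$, so these four vertices induce a $P_4$, not a cycle; and if both $a_ia_j$ and $b_ib_j$ are present, they induce a $C_4$. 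Chordal bipartiteness forbids neither configuration, so the ``$C_{\ge 6}$-chord property'' gives you no control over such conflicts, and the ``pairwise far apart'' thinning step is unjustified. This is a genuine gap; the fallback construction you mention is even less developed.

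What actually closes the gap is triangle-freeness, not chordal bipartiteness. The conflict edges live entirely inside the sets $\{a_i\}$ and $\{b_i\}$, each of which induces a triangle-free graph (as $G$ is bipartite). By Ramsey's theorem applied twice --- first extract a large independent set among the $a_i$'s, then among the surviving $b_i$'s --- a cut-induced matching of size $k$ always contains a sub-matching induced in all of $G$ whose size grows with $k$; hence unbounded $\cutmim$ forces unbounded $\cutsim$ on bipartite graphs, and your plan goes through with \cite{BCM15} as a black box. This is precisely what the paper does, except that it invokes a known result in place of the Ramsey argument: Kang et al.~\cite{KKST17} proved that every class of $(K_{3}\boxminus S_{3}, K_{3}\boxminus K_{3})$-free graphs of unbounded mim-width has unbounded sim-width, and since both of those graphs contain triangles, chordal bipartite graphs form such a class. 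Combined with \cite{BCM15}, this yields the theorem in three lines. So your overall strategy is salvageable, but the step you delegate to chordal bipartiteness must instead be delegated to triangle-freeness plus Ramsey (or to the cited lemma of \cite{KKST17}).
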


\begin{proof}
Let $K_{3}\boxminus S_{3}$ be the graph that consists of a triangle on vertices $x,y,z$ to which we add three new vertices $x',y',z'$ with edges $xx'$, $yy'$ and $zz'$. Let $K_{3}\boxminus K_{3}$ be the graph obtained from two triangles on vertices $a_1,b_1,c_1$ and $a_2,b_2,c_2$, respectively, to which we add the edges $a_1a_2$, $b_1b_2$ and $c_1c_2$. Kang et al.~\cite{KKST17} proved that every class of $(K_{3}\boxminus S_{3}, K_{3}\boxminus K_{3})$-free graphs of unbounded mim-width has unbounded sim-width. Recall that the class of chordal bipartite graphs has unbounded mim-width~\cite{BCM15}. It remains to observe that every chordal bipartite graph is $(K_{3}\boxminus S_{3}, K_{3}\boxminus K_{3})$-free. \qed
\end{proof}

\noindent
Theorem~\ref{t-chordalbip} strengthens the aforementioned result that chordal bipartite graphs have unbounded mim-width~\cite{BCM15}.

\subsection{The Missing Relationship in Figure~\ref{f-power}}\label{s-new}

A \emph{path decomposition} of a graph $G = (V,E)$ is a sequence
of subsets of vertices whose union is $V$ and such that: (1) for each edge $vw \in E$, there exists a subset containing both $v$ and $w$;
and (2) for each $v \in V$ the subsets containing $v$ are consecutive in the sequence. The width of a path decomposition is one less than the maximum size of a subset.
The \emph{path-width} of a graph $G$, denoted $\pw(G)$,
is the minimum possible width over all possible path
decompositions of~$G$.

Mannino, Oriolo, Ricci and Chandran~\cite{M-O-R-C-thinness} proved that $\thin(G)\leq \pw(G)+1$. We slightly modify their proof to show that also proper thinness is more powerful than path-width.

\begin{theorem} For a graph $G$,
$\pthin(G) \leq 2^{\pw(G)}(\pw(G)+1)$. Moreover, given a path
decomposition of width $q$, a vertex ordering and a strongly
consistent partition into at most $2^q(q+1)$ independent sets can be found
in polynomial time.
\end{theorem}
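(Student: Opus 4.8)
The plan is to take an optimal path decomposition of width $q=\pw(G)$ and use it simultaneously to produce a vertex ordering and a strongly consistent partition into few independent classes. First I would fix a path decomposition $(W_1,\dots,W_m)$ of width $q$, so that each bag has at most $q+1$ vertices. The key idea carried over from~\cite{M-O-R-C-thinness} is to associate with each vertex $v$ its \emph{first bag}, namely the smallest index $i$ such that $v\in W_i$, and to order the vertices by increasing first bag (breaking ties within a bag arbitrarily but consistently). This ordering already gives thinness at most $\pw(G)+1$; the extra work here is to refine the partition so that it is \emph{strongly} consistent rather than merely consistent, which is exactly why the bound weakens from $q+1$ to $2^q(q+1)$.

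The main step is to define the partition. Within each bag $W_i$ I would record, for each vertex $v$ whose first bag is $W_i$, the set $S_v\subseteq W_i\setminus\{v\}$ of vertices of the bag that are \emph{not} adjacent to $v$; equivalently I would track the adjacency pattern of $v$ to the at most $q$ other bag-members. Since $|W_i|\le q+1$, there are at most $2^q$ possible such patterns, and combined with the index of the first bag modulo $q+1$ (or the bag position restricted suitably) this yields at most $2^q(q+1)$ classes. The reason to partition by adjacency pattern is that strong consistency imposes a condition on \emph{both} the left and the right of a vertex: if $v_s,v_t$ lie in the same class and $v_rv_t\in E$ with $r<s<t$, we need $v_rv_s\in E$. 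The first-bag ordering guarantees that any neighbour $v_r$ of $v_t$ with $r<t$ must already appear in the bag where $v_t$ first occurs, and forcing same-class vertices to share their adjacency pattern to that bag is what propagates the edge to $v_s$ as well. I would verify both the ordinary consistency condition (left side) and the new right-side condition by this bag-pattern bookkeeping, using the interval property (condition~(2) of a path decomposition) to argue that neighbours cannot ``skip over'' the relevant bag.

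Concretely, for the verification I would take a triple $r<s<t$ with the relevant hypotheses and use the fact that any edge $v_rv_t$ forces $v_r$ and $v_t$ to appear together in some bag; because of the ordering, that common bag is the first bag of $v_t$, and because $r<s<t$ with $v_s$ in the same class as either $v_r$ or $v_t$, the shared adjacency pattern forces the desired edge. The classes are independent sets because vertices with the same first bag and same adjacency pattern to that bag are pairwise non-adjacent within the bag, and the path-decomposition interval property prevents edges between them across bags; this is where I would be slightly careful, possibly incorporating the vertex itself into its own pattern so that two vertices in the same class are genuinely non-adjacent.

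The hard part will be pinning down the exact class invariant so that the \emph{right-side} strong-consistency condition holds without inflating the class count beyond $2^q(q+1)$. Ordinary consistency only constrains how a vertex sees its past, so the $\pw(G)+1$ bound suffices there; the symmetric backward condition requires each class to be homogeneous in how its members attach to a common bag, and getting the bookkeeping to cover both directions simultaneously — while still using only the adjacency pattern to a single bag of size at most $q+1$ — is the crux. Once the invariant is correctly stated, checking the two triple conditions and independence should be routine, and the polynomial-time claim follows since a width-$q$ path decomposition can be processed in linear time to compute first bags and adjacency patterns.
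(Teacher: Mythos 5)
There is a genuine gap, and it lies in the $(q+1)$ factor of your partition. For strong consistency (and for independence of the classes) the $(q+1)$-way split must guarantee that two vertices in the same class never share a bag, i.e.\ it must be a proper colouring of the interval completion of $G$. ``First-bag index modulo $q+1$'' does not guarantee this, because a vertex may live in arbitrarily many consecutive bags. Concretely, take the star $K_{1,4}$ with centre $c$ and leaves $l_1,\dots,l_4$, and the width-$1$ path decomposition with bags $W_i=\{c,l_i\}$ for $i=1,\dots,4$. Every vertex has the same single-bit pattern (``adjacent to the one other member of its first bag''), so your classes are determined by first-bag parity: $\{c,l_1,l_3\}$ and $\{l_2,l_4\}$, whatever the tie-break between $c$ and $l_1$. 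The class $\{c,l_1,l_3\}$ is not an independent set, and even \emph{ordinary} consistency fails: $c < l_3 < l_4$, with $c$ and $l_3$ in the same class and $cl_4\in E$ but $l_3l_4\notin E$. (A secondary defect: your patterns are subsets of different bags for different vertices, so ``same pattern'' is not well defined without some global indexing of the pattern's coordinates; bag positions do not provide one.)

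The paper avoids both problems by importing from \cite{M-O-R-C-thinness} not just the first-bag ordering but also a partition into $q+1$ \emph{independent} colour classes (a proper colouring of the interval supergraph) that is consistent with the ordering and has the key extra property that each vertex has at most one neighbour smaller than itself of each colour --- necessarily the greatest smaller vertex of that colour. The $2^q$ refinement is then indexed by \emph{colours}, not bag positions: each colour class is split according to which of the other $q$ colours contain a smaller neighbour of the vertex. Strong consistency then follows almost immediately: if $u<z<v$ with $uv\in E$ and $z,v$ in the same refined class, then $u$ is the greatest vertex of its colour below $v$, hence also the greatest one below $z$; since $z$'s profile says it has a smaller neighbour of $u$'s colour, that neighbour must be $u$, so $uz\in E$. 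Your observation that every smaller neighbour of $v$ lies in $v$'s first bag is the right intuition for why a length-$q$ profile suffices (bag members carry pairwise distinct colours), but the profile must be recorded colour-by-colour, which is globally comparable, rather than as an adjacency pattern inside the bag. With that replacement --- proper interval-graph colouring for the $(q+1)$ factor, colour-indexed smaller-neighbour profile for the $2^q$ factor --- your plan becomes essentially the paper's proof.
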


\begin{proof} In~\cite{M-O-R-C-thinness}, it is proved that, given a
path decomposition of width $q$, one can find a vertex ordering and a strongly
consistent partition into at most $q+1$ independent sets (a
colouring) in polynomial time, with the additional
property that each vertex has at most one neighbour smaller than
itself of each colour. By consistency, that possible neighbour is
the greatest vertex smaller than itself of that colour, if such a
vertex exists.

We refine that partition to make it strongly consistent with the
order, splitting each colour class into at most $2^q$ sets
according to whether or not it has a neighbour smaller than itself in each
of the other colour classes. Notice that refining a partition
maintains consistency, so, in order to prove strong consistency,
let $u < z < v$, $uv \in E(G)$, with $z, v$ in the same refined set.
Since the colour classes are independent sets, $u$ and $v$ are of
distinct colours, say $a$ and $b$, respectively, and $u$ is the
greatest vertex smaller than $v$ of colour $a$. By the way of
refining the colour classes, $z$ is of colour $b$ and since $v$
does have a neighbour of colour $a$ smaller than itself, so does
$z$. Since $u$ is the greatest vertex smaller than $z$ of
colour $a$, $uz \in E(G)$. So, we have defined a vertex ordering
and a partition into at most $2^q(q+1)$ independent sets that are
strongly consistent and can be found in polynomial time. \qed
\end{proof}

\section{The Proof of Theorem~\ref{t-lc}}\label{s-new2}

In this section we prove Theorem~\ref{t-lc}.
Let $G$ be a graph. A function $c:V(G)\to \{1,2,\ldots\}$ is a {\it colouring} of $G$ if $c(u)\neq c(v)$ for every pair of adjacent vertices $u$ and $v$.
A {\it list assignment} of a graph $G=(V,E)$ is a function~$L$ that prescribes a list of ``admissible'' colours $L(u)\subseteq \{1,2,\ldots\}$ to each $u\in V$.
A colouring $c$  {\it respects}~${L}$ if  $c(u)\in L(u)$ for every $u\in V$.
For an integer~$k\geq 1$, if $L(u)\subseteq \{1,\ldots,k\}$ for each $u\in V$, then $L$ is a {\it list $k$-assignment}.
The {\sc List $k$-Colouring} problem is to decide if a graph $G$ with a list $k$-assignment $L$ has a colouring that respects $L$.  Note that if every list is $\{1,\ldots,k\}$, we obtain the classical {\sc $k$-Colouring} problem.

We need the following well-known result regarding the {\sc $2$-List Colouring} problem. For this problem, a graph~$G$ with a list assignment $L$ is given as input, where $|L(u)|\leq 2$ for every $u\in V(G)$.

\begin{theorem}[\cite{Ed86}]\label{t-2sat}
The {\sc $2$-List Colouring} problem is linear-time solvable.
\end{theorem}

\noindent
We are now ready to prove the main result of this section.

\medskip
\noindent
{\bf Theorem~\ref{t-lc} (restated).} {\it For $k\geq 4$, {\sc List $k$-Colouring} is \NP-complete for star convex graphs and comb convex graphs, while {\sc List $3$-Colouring} is polynomial-time solvable for star convex graphs.}

\begin{proof}
We start with proving the hardness results.
It is well-known that for $k\geq 3$, {\sc List $k$-Colouring} is \NP-complete for bipartite graphs, even for various subclasses of bipartite graphs (see, for example,~\cite{CC06,HT96}).
Hence, let $k\geq 3$ and $(G,L)$ be an instance of {\sc List $k$-Colouring}, where
$G$ is a bipartite graph with partition classes $A$ and $B$.

We let $G^*$ be the bipartite graph obtained from $G$ by adding $|A|$ new vertices to $A$ that we make complete to $B$.
Then, by Lemma~\ref{constr}, $G^*$ is comb convex. Let $L^*$ be the extension of $L$ where each new vertex has been given the list $\{k+1\}$. We observe that $(G,L)$ is a yes-instance of {\sc List $k$-Colouring} if and only if
$(G^*,L^*)$ is a yes-instance of {\sc List $(k+1)$-Colouring}.

We let $G'$ be the bipartite graph obtained from $G$ by adding one new vertex to $A$ that we make complete to $B$.
By Lemma~\ref{constr}, $G'$ is star convex. We give the new vertex the list $\{k+1\}$ and repeat the argument above.

We now prove the polynomial-time result, which only holds for star convex graphs. So consider a star convex graph $G=(A,B,E)$ with a list $3$-assignment~$L$.
We may assume without loss of generality that for every $u\in B$, it holds that $L(u)$ has size at least~$2$; else if $L(u)=\{i\}$ for some $u\in B$ and
$i\in \{1,2,3\}$, then we delete colour~$i$ from the list of every neighbour of $u$ in $A$ and delete $u$ from $G$.

Let $a\in A$ correspond to the center of the star that is a support for $G$. Then, every vertex in $B$ is either adjacent to $a$ or has degree~$1$. We may remove every vertex $u\in B$
of degree~$1$ from $G$: as $|L(u)|\geq 2$, we are always able to assign $u$ a colour from $L(u)$ after restoring it. Afterward removing vertices of $B$ that have degree~$1$ from $G$, every vertex of $B$ is adjacent to~$a$.

We now consider all of the at most three options of colouring $a$ with a colour from $L(a)$. For each chosen colour $c(a)\in L(a)$, we remove $c(a)$ from the list of every vertex of $B$ and moreover, we assign $c(a)$ to every other vertex $a'\in A$ with $c(a)\in L(a')$. Afterwards we obtain an instance of {\sc $2$-List Colouring}, which means we can apply Theorem~\ref{t-2sat}.
\qed
\end{proof}

\section{Final Remarks}\label{s-final}

In this paper we generalized and unified a number of algorithmic results for generalized convex graphs by showing boundedness of mim-width.
We are not aware of any new algorithmic implications due to our refined width parameter analysis in Section~\ref{s-refined}.
We do observe that there exist problems that are \NP-complete for graphs of bounded (linear) mim-width, but polynomial-time solvable for graphs of bounded thinness. Namely, Vatshelle~\cite{Va12} proved that {\sc Clique} (the problem of deciding if a graph has a clique of size at least~$k$ for some given integer~$k$) is \NP-complete for graphs of mim-width at most~$6$; in fact the branch decomposition in Vatshelle's proof turns out to be even linear. On the other hand, {\sc Clique} belongs to a large framework of graph problems that are polynomial-time solvable for graphs of bounded thinness~\cite{BE19}. However, {\sc Clique} is trivial on bipartite graphs, and all the graph classes we consider in this paper are bipartite. It would therefore be interesting to research if there are natural problems that are \NP-complete for bipartite graphs of bounded mim-width but  polynomial-time solvable for graphs of bounded thinness or bounded linear mim-width. We are also not aware of any problems that are \NP-complete for graphs of bounded thinness, but polynomial-time solvable for graphs of bounded proper thinness.

For answering the above questions, more structural results may be needed. For instance, finding classifications of $(H_1,H_2)$-free graphs and $H$-free (circular) convex graphs of bounded thinness might help. Moreover, can we characterize convex graphs of proper thinness at most~$k$ and  circular convex graphs of thinness at most~$k$ by some obstruction set, for example, a forbidden set of induced subgraphs?

In addition, it would also be interesting to obtain dichotomies for more graph problems restricted to classes of generalized convex graphs. We ask this question in particular for graph problems known to be solvable in polynomial time for graph classes whose mim-width is bounded and quickly computable.
In our paper we gave examples of ten of such problems (see Corollaries~\ref{c-2} and~\ref{c-22}).

Generalized convex graphs also play a role in other settings.
For example, Chen et al.~\cite{CK15} considered the problem {\sc Subset Interconnection Design}, which is
to decide if a bipartite graph belongs to a class of ${\cal H}$-convex graphs. This problem and its variants have several applications,
for example in the design of scalable overlay networks and vacuum systems~\cite{CK15}, combinatorial auctions~\cite{GG07} and fair allocation of indivisible goods~\cite{BCE17}. Are the problems in these settings solvable for graph classes whose mim-width is bounded and quickly computable? We leave this for future research as well.

In a recent arXiv paper, Jaffke, Kwon and Telle~\cite{JKT21} introduced the notion of bi-mim-width for directed graphs. As a consequence of their study, they considered {\it $H$-convex graphs} for a fixed graph~$H$, which in our terminology corresponds to ${\cal H}$-convex graphs where ${\cal H}$ consists of all subdivisions of $H$. For example, when $H$ is the cycle on two vertices with two edges between them, we obtain the class of circular convex graphs.
They showed that the linear mim-width of an $H$-convex graph is at most $6|E(H)|$.
For circular convex graphs this leads to a bound of~$12$.
If $H$ is a tree of maximum degree at most~$\Delta$ with at most $t$ vertices of degree at least~$3$, then the bound of $6|E(H)|$ leads to a bound of
$6(t\Delta - t + 1)$, as $H$ has at most $t\Delta - t + 1$ edges.
Note that combining Theorem~\ref{t-thin} with Lemma~\ref{l-mwt} yields a bound of $2+t(\Delta-2)$.

We finish our paper with some open problems on {\sc List $k$-Colouring}. The first open problem results from Section~\ref{s-new2}: what is the computational complexity of {\sc List $3$-Colouring} for comb convex graphs? Our second open problem was also asked by Huang et al.~\cite{HJP15}, who proved that for all $k\geq 4$, {\sc List $k$-Colouring} is \NP-complete
for $P_8$-free chordal bipartite graphs. What is the computational complexity of {\sc List $3$-Colouring} for chordal bipartite graphs? Answering both questions would complete the results for {\sc List $k$-Colouring} for the graph classes displayed in Figure~\ref{fig:dia}.

\end{document}